\newtheorem{theorem}{Theorem}[section]
\newcommand{\x}{\mathbf{x}}
\begin{document}

\title{Guarantees of Total Variation Minimization for Signal Recovery}
\author{Jian-Feng Cai and Weiyu Xu}

\maketitle

\begin{abstract}
In this paper, we consider using total variation minimization to recover signals whose gradients have a sparse support, from a small number of measurements. We establish the proof for the performance guarantee of total variation (TV) minimization in recovering \emph{one-dimensional} signal with sparse gradient support. This partially answers the open problem of proving the fidelity of total variation minimization in such a setting \cite{TVMulti}. In particular, we have shown that the recoverable gradient sparsity can grow linearly with the signal dimension when TV minimization is used. Recoverable sparsity thresholds of TV minimization are explicitly computed for $1$-dimensional signal by using the Grassmann angle framework. We also extend our results to TV minimization for multidimensional signals. Stability of recovering signal itself using $1$-D TV minimization has also been established through a property called ``almost Euclidean property for $1$-dimensional TV norm''. We further give a lower bound on the number of random Gaussian measurements for recovering $1$-dimensional signal vectors with $N$ elements and $K$-sparse gradients. Interestingly, the number of needed measurements is lower bounded by $\Omega((NK)^{\frac{1}{2}})$, rather than the $O(K\log(N/K))$ bound frequently appearing in recovering $K$-sparse signal vectors.

\end{abstract}

\section{Introduction}
Compressed sensing has recently gained a lot of attention in many applications including medical imaging, because it enables acquiring sparse signals from a much smaller number of samples than the ambient dimension of signal. Compressed sensing takes advantage of the fact that most signals of interest in practice are sparse: there are only a few nonzero or big elements when the signals are represented over a certain dictionary such as wavelet basis. For these types of sparse or compressible signals, compressed sensing theory \cite{CandesRombergTao,DonohoCS} has established that a small number of nonadaptive measurements are often sufficient to efficiently recover them under methods such as $\ell_1$ minimization \cite{CandesRombergTao,DonohoCS,CandesErrorCorrection, CT1}.

Without of loss of generality, let us assume that $\bm{x} \in \mathbb{R}^N $ is a one-dimensional (compared with $2$-dimensional images and $3$-dimensional videos) signal vector of $N$ elements, and has no more than $K$ ($K \ll N$) nonzero elements. In compressed sensing, we sample $\bm{x}$ using $M$ ($M<N$) linear projections
$$
  \bm{y}=A\bm{x},
$$
where $A$ is an $M \times N$ measurement matrix and $\bm{y}$ is an $M \times 1$ measurement result vector. Knowing $A$ and the measurement result $\bm{y}$, $\ell_1$ minimization is often used to recover the sparse $\bm{x}$:
\begin{equation}
       \min_{\bm{x}}{\|\bm{x}\|_1} ~~~~~\text{subject~to}~~\bm{y}=A\bm{x}.
\label{eq:L1program}
\end{equation}
It has been shown that under suitable conditions on the measurement matrix $A$, it is guaranteed that the original $\bm{x}$ is the unique solution to $\ell_1$ minimization \eqref{eq:L1program}. In fact, if $A$ satisfies the so-called restricted isometry property (RIP), then the solution of \eqref{eq:L1program} matches exactly with the original signal \cite{CS,CandesRombergTao, DonohoTanner}. Various results concerning the perfect reconstruction of the original signal by solving \eqref{eq:L1program} have been established in \cite{CandesRombergTao, DonohoCS, Neighborlypolytope,DonohoMalekiMontanari,DonohoTanner, Aspremont,  Kasin, GarnaevGluskin, XuHassibi,Yin}.

The results above hold true only for sparse signals, and they can be extended to signals that are synthesized by a linear combination of few atoms in a (redundant) dictionary with incoherent atoms \cite{CSRedundant}. However, there are numerous practical examples in which a signal of interest does not fall into the category where the aforementioned theory work. One such an example is signal that has a sparse gradient (i.e., the signal is piecewise constant), which arises frequently from imaging. Images with little detail are usually modelled as piecewise constant functions. For simplicity, we assume that $\bm{x}\in\mathbb{R}^N$ is a vector generated from $1$-dimensional piecewise constant signal. Let $D\bm{x}$ be its finite difference defined by $[D\bm{x}]_i=x_{i+1}-x_i$ for $i=1,2,\ldots,N-1$. Since $\bm{x}$ is piecewise constant, we must have that $D\bm{x}$ is sparse. Assume that $D\bm{x}$ has only $K$ ($K\ll N$) nonzero entries. Let $\bm{y}=A\bm{x}\in\mathbb{R}^M$ be $M$ linear samples of $\bm{x}$. Then, to recover $\bm{x}$, one usually solves
\begin{equation}\label{eq:TVprogram}
\min_{\bm{x}}{\|D\bm{x}\|_1} ~~~~~\text{subject~to}~~\bm{y}=A\bm{x}.
\end{equation}
The regularization term $\|D\bm{x}\|_1$ is called the \emph{total variation (TV)} of $\bm{x}$. When $\bm{x}\in\mathbb{R}^{N^d}$ is generated from $d$-dimensional signals, we only need to replace $D$ by the concatenation of directional finite differences, and $\|D\bm{x}\|_1$ is the anisotropic TV of $\bm{x}$.

TV regularization has been used extensively in the literature for decades in imaging sciences \cite{CDOS_JAMS,KeelingTV,OsherTV,CTTV} and other related fields \cite{TVSurface,TVProfile}. The minimization problem \eqref{eq:TVprogram} has the same form as the minimization in the analysis-based compressed sensing in \cite{CSAnalysis}. However, the perfect reconstruction result in \cite{CSAnalysis} can not be applied to \eqref{eq:TVprogram}, as the rows of $D$ do not form a frame ($D$ has a nontrivial null space). Despite the great importance of the TV minimization in applications, rigorous proofs of conditions of successfully recovering signal by using the TV minimization have only recently been established \cite{TVMulti,TV2d}. To establish such conditions, \cite{TVMulti,TV2d} first transformed $d$-dimensional ($d\geq2$) signals with sparse gradients into signals compressible over the Haar orthogonal wavelet basis. Then a modified restricted isometry condition, which takes into account the Haar orthogonal wavelet transformation, was established for the matrix $A$ such that \eqref{eq:TVprogram} offers a stable recovery of $\bm{x}$.  However, it is noted in \cite{TVMulti,TV2d} that establishing conditions for successfully recovering $1$-dimensional (namely $d$=1) signal vector remains an open problem.
%This is partially because $1$-dimensional signals with sparse gradients are not necessarily compressible over the Haar orthogonal basis.
This is partially due to the fact that small TV of a $1$-dimensional signal does not necessarily imply fast decay of its Haar wavelet coefficients.

In this paper, we establish the proof for performance guarantees of TV minimization in recovering \emph{$1$-dimensional} signal with sparse gradient support. This partially answers the open problem of proving the fidelity of total variation minimization in such a setting \cite{TVMulti}. Compared with \cite{TVMulti,TV2d}, our results do not use the restricted isometry condition, but instead directly work on the null space condition of the measurement matrix $A$.  To establish the null space condition of interest, we use ``Escape through the Mesh'' theorem \cite{Gordon,RV2008,StojnicThresholds} to estimate the Gaussian width \cite{Gordon, RV2008} of a cone specified by the null space condition. We then use the Grassmann angle framework to calculate the thresholds on gradient sparsity such that TV minimization \eqref{eq:TVprogram} can recover with high probability. We further extend our results to TV minimization for higher dimensional signals. For $d \geq 2$, we have obtained performance bounds for TV minimization comparable to results in \cite{TVMulti,TV2d}. We further give a lower bound on the number of random Gaussian measurements for recovering $1$-dimensional signal vectors with $N$ elements and $K$-sparse gradients. Interestingly, the number of needed measurements is lower bounded by $\Omega((NK)^{\frac{1}{2}})$, rather than the $O(K\log(N/K))$ bound frequently appearing in recovering $K$-sparse signal vectors. In \cite{MinimaxThresholds}, an average-case phase transition was calculated for TV minimization through evaluating the asymptotic minimax Mean Square Error (MSE) of TV minimization. Compared with \cite{MinimaxThresholds}, our results are more concerned with the worst-case performance guarantees which are uniformly true for all the possible supports for the signal gradient.

The rest of this paper is organized as follows. In Section \ref{sec:1d}, we establish the performance guarantee of TV minimization for $1$-dimensional signal vector. Our proof is based on a null space condition introduced in Section \ref{sec:Nullcondition}, and the derivations of performance guarantees of TV minimization are respectively presented in Section \ref{sec:escapefrommesh} and Section \ref{sec:lowerbound} using the escape through the mesh theorem and in Section \ref{sec:Grass} via Grassman angle framework.

In Section \ref{sec:multidimensional}, we extend our results to TV minimization for multidimensional signals. Section \ref{sec:conclusion} concludes our paper and discusses future directions.

\section{One-dimensional Signals}\label{sec:1d}
In this section, we establish the main result of this paper on the performance guarantee of TV minimization in recovering one-dimensional signal with sparse gradient support. Throughout this section, we will assume that $\bm{x}\in\mathbb{R}^N$ is generated from a one-dimensional signal and $D\bm{x}$ contains at most $K$ nonzeros. We assume that the entries of $A\in\mathbb{R}^{M\times N}$ are randomly drawn from i.i.d. Gaussian distribution. We give two different arguments on the proofs, namely, the one using ``Escape through the Mesh'' theorem \cite{Gordon,RV2008,StojnicThresholds} in Section \ref{sec:escapefrommesh} and the Grassmann angle framework in Section \ref{sec:Grass}. There two arguments will lead to two different recovery threshold bounds for minimal $M$. Both the arguments are based on a null space property of the matrix $A$, which are presented in Section \ref{sec:Nullcondition}.

\subsection{The Null Space Condition for Successful Recovery via the TV Minimization}
\label{sec:Nullcondition}

In this section, we give a condition on the null space of the linear projection matrix $A$, such that TV minimization successfully recovers one-dimensional signals with sparse gradients. We remark that this condition is not new, and  it has appeared in the proofs in \cite{TVMulti, TV2d}.

\begin{theorem} \label{thm:TVnullspacecondition1}
Assume $A\in\mathbb{R}^{M\times N}$ and $\bm{y}=A\bm{x}$.
%Suppose $\bm{x}$ is a signal vector whose gradients $D\bm{x}$ have no more than $K$ nonzero elements, $A$ is an $M\times N$ measurement matrix, and $\bm{y}=A\bm{x}$.
Then $\bm{x}$ is the unique solution to \eqref{eq:TVprogram} for all $\bm{x}$ whose gradients $D\bm{x}$ have no more than $K$ nonzero elements (no matter what the support $\mathcal{K}$ of $D\bm{x}$ is) if and only if the following condition holds: for every nonzero vector $\bm{z}$ in the null space of $A$ (namely $A\bm{z}=0$, $\bm{z} \neq \bm{0}$),
\begin{equation}\label{eq:TVnullspacecondition}
\|(D\bm{z})_{\mathcal{K}}\|_1< \|(D\bm{z})_{\mathcal{K}^c}\|_1 ~~\forall \mathcal{K}\subset\{1,2,\ldots,N-1\}\mbox{~s.t.~}|\mathcal{K}|\leq K.
\end{equation}
\end{theorem}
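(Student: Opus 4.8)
The plan is to prove the two directions separately, both by reduction to the standard null-space argument for $\ell_1$ recovery, adapted to the operator $D$. Throughout, recall that $A\bm{x} = A\bm{x}'$ exactly when $\bm{x} - \bm{x}' = \bm{z}$ lies in the null space of $A$, and that $D$ is linear so $D\bm{x} - D\bm{x}' = D\bm{z}$.

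For the \emph{sufficiency} direction, suppose the null-space condition \eqref{eq:TVnullspacecondition} holds and let $\bm{x}$ have $D\bm{x}$ supported on some $\mathcal{K}$ with $|\mathcal{K}| \le K$. Let $\bm{x}'$ be any feasible point of \eqref{eq:TVprogram}, i.e. $A\bm{x}' = \bm{y} = A\bm{x}$, and set $\bm{z} = \bm{x}' - \bm{x} \in \ker A$; assume $\bm{z} \neq \bm{0}$. Then I would write, using that $(D\bm{x})_{\mathcal{K}^c} = \bm{0}$ and the triangle inequality on the support $\mathcal{K}$ and its complement,
\begin{equation*}
\|D\bm{x}'\|_1 = \|(D\bm{x} + D\bm{z})_{\mathcal{K}}\|_1 + \|(D\bm{z})_{\mathcal{K}^c}\|_1 \ge \|(D\bm{x})_{\mathcal{K}}\|_1 - \|(D\bm{z})_{\mathcal{K}}\|_1 + \|(D\bm{z})_{\mathcal{K}^c}\|_1.
\end{equation*}
By \eqref{eq:TVnullspacecondition} the quantity $-\|(D\bm{z})_{\mathcal{K}}\|_1 + \|(D\bm{z})_{\mathcal{K}^c}\|_1$ is strictly positive, so $\|D\bm{x}'\|_1 > \|D\bm{x}\|_1$, showing $\bm{x}$ is the unique minimizer.

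For the \emph{necessity} direction, I would argue by contrapositive: suppose \eqref{eq:TVnullspacecondition} fails, so there exist a nonzero $\bm{z} \in \ker A$ and a set $\mathcal{K}$ with $|\mathcal{K}| \le K$ such that $\|(D\bm{z})_{\mathcal{K}}\|_1 \ge \|(D\bm{z})_{\mathcal{K}^c}\|_1$. I then want to construct an $\bm{x}$ with $|\mathrm{supp}(D\bm{x})| \le K$ that is \emph{not} the unique TV-minimizer. The natural choice is to take $\bm{x}$ so that $D\bm{x} = -(D\bm{z})_{\mathcal{K}}$, i.e. $\bm{x}$ agrees with $-\bm{z}$ in gradient on $\mathcal{K}$ and has zero gradient off $\mathcal{K}$; such an $\bm{x}$ exists because $D : \mathbb{R}^N \to \mathbb{R}^{N-1}$ is onto (any target difference vector is realized by a signal, unique up to an additive constant), so every vector in $\mathbb{R}^{N-1}$ — in particular $-(D\bm{z})_{\mathcal{K}}$ — is in the range of $D$. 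Then $\bm{x}' = \bm{x} + \bm{z}$ satisfies $A\bm{x}' = A\bm{x}$, and $D\bm{x}' = -(D\bm{z})_{\mathcal{K}} + D\bm{z} = (D\bm{z})_{\mathcal{K}^c}$, so
\begin{equation*}
\|D\bm{x}'\|_1 = \|(D\bm{z})_{\mathcal{K}^c}\|_1 \le \|(D\bm{z})_{\mathcal{K}}\|_1 = \|D\bm{x}\|_1,
\end{equation*}
which shows $\bm{x}$ is not the unique solution of \eqref{eq:TVprogram} (either $\bm{x}'$ attains a strictly smaller objective, or it ties, in which case uniqueness already fails since $\bm{z} \neq \bm{0}$ forces $\bm{x}' \neq \bm{x}$). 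This contradicts the assumed recovery property.

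The one genuinely delicate point — the main obstacle — is the surjectivity of $D$ used in the necessity direction: I must make sure that the target gradient vector $-(D\bm{z})_{\mathcal{K}}$ actually arises from some signal in $\mathbb{R}^N$. For the one-dimensional finite-difference operator $[D\bm{x}]_i = x_{i+1} - x_i$ this is immediate — given any $\bm{g} \in \mathbb{R}^{N-1}$ one can just set $x_1 = 0$ and $x_{i+1} = x_i + g_i$ — so there is no real gap here, but it is worth stating explicitly, and it is exactly the place where the argument would need more care in the multidimensional setting where $D$ (a concatenation of directional differences) is no longer onto. Everything else is the standard "balancing on the support" computation, identical in spirit to the classical null-space property for $\ell_1$ minimization with the identity replaced by $D$.
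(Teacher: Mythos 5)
Your proof is correct and is precisely the standard null-space-property argument (with the identity replaced by $D$, plus the surjectivity of the one-dimensional difference operator for the necessity direction) that the paper itself invokes when it omits the proof and refers to the $\ell_1$ null space condition proofs in \cite{StojnicThresholds,Yin}. No gaps; your explicit remark that surjectivity of $D$ is the point that would need care in higher dimensions matches the paper's observation that the condition is only sufficient in the multidimensional case.
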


We omit the detailed proof of this theorem, since it is very similar to the proof of null space conditions for $\ell_1$ minimization; see, for example, \cite{StojnicThresholds, Yin}.

%This condition can also be easily extended to similar theorems for $2$-dimensional or higher dimensional signal vectors. We omit their corresponding statements and proofs here.

\subsection{Recovery Thresholds via Escape through the Mesh Theorem}
\label{sec:escapefrommesh}

In this subsection, we prove that a measurement matrix $A$ whose elements are i.i.d. Gaussian random variables satisfies the null space condition in Theorem \ref{thm:TVnullspacecondition1} with high probability, as long as
$$
M\geq C (NK)^{1/2}\ln N,
$$
%as $K \rightarrow \infty$ and $N \rightarrow \infty$,
where $C>0$ is a constant. Our proof builds on the following ``Escape through the Mesh'' theorem.

\begin{theorem} [Escape through the mesh \cite{Gordon}]
\label{thm:escapethroughmesh}
Let $\mathcal{S}$ be a subset of the unit Euclidean sphere $\mathbb{S}^{N-1}$ in $\mathbb{R}^N$. Let $Y$ be a random $(N-M)$-dimensional subspace of $\mathbb{R}^{N}$, distributed uniformly in the Grassmanian with respect to the Haar measure. Define the Gaussian width for the set $\mathcal{S}$ as $w(\mathcal{S})$=$E(\sup_{\bm{w}\in \mathcal{S}}$$(\bm{h}^T\bm{w}))$, where $\bm{h}$ is a random column vector in $\mathbb{R}^{N}$ with i.i.d. $\mathcal{N}(0,1)$ Gaussian elements. Assume that $w(\mathcal{S}) < (\sqrt{M}-\frac{1}{2\sqrt{M}})$. Then
\begin{equation*}
P(Y \bigcap \mathcal{S}=\emptyset)>1-3.5 e^{-\frac{(\sqrt{M}-\frac{1}{2\sqrt{M}})-w(\mathcal{S})}{18}}.
\end{equation*}
\end{theorem}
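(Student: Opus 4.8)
The plan is to represent the random subspace $Y$ explicitly as the kernel of a Gaussian matrix and then convert the disjointness event into a lower bound on a minimum of Gaussian norms. Let $G\in\mathbb{R}^{M\times N}$ have i.i.d.\ $\mathcal{N}(0,1)$ entries; since the law of $G$ is invariant under $G\mapsto GQ$ for every orthogonal $Q\in O(N)$, the subspace $\ker G$ is invariant in distribution under the action of $O(N)$, and, being almost surely of dimension $N-M$ (as $G$ has full row rank a.s.), it is distributed uniformly on the Grassmannian. So we may take $Y=\ker G$. Because $\mathcal{S}\subseteq\mathbb{S}^{N-1}$, the event $\{Y\cap\mathcal{S}=\emptyset\}$ is exactly $\{G\bm{w}\neq\bm{0}\ \text{for all }\bm{w}\in\mathcal{S}\}$, i.e.
\[
  \mu:=\min_{\bm{w}\in\mathcal{S}}\|G\bm{w}\|_2>0 .
\]
It thus suffices to show that $\mu$ exceeds the quantity $a:=\bigl(\sqrt{M}-\tfrac{1}{2\sqrt{M}}\bigr)-w(\mathcal{S})$, which is positive by the theorem's hypothesis, with the stated probability. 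I would do this by first bounding $\mathbb{E}\mu$ from below and then controlling the lower deviation of $\mu$ about its mean.

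For the mean, write $\|G\bm{w}\|_2=\max_{\bm{u}\in\mathbb{S}^{M-1}}\langle G\bm{w},\bm{u}\rangle$, which exhibits
\[
  \mu=\min_{\bm{w}\in\mathcal{S}}\ \max_{\bm{u}\in\mathbb{S}^{M-1}}\ X_{\bm{w},\bm{u}},\qquad X_{\bm{w},\bm{u}}:=\langle G\bm{w},\bm{u}\rangle ,
\]
as the min--max of a centered Gaussian process, with covariance $\mathbb{E}\,X_{\bm{w},\bm{u}}X_{\bm{w}',\bm{u}'}=\langle\bm{w},\bm{w}'\rangle\langle\bm{u},\bm{u}'\rangle$. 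The heart of the argument is Gordon's Gaussian min--max comparison inequality, applied with the \emph{decoupled} comparison process $Y_{\bm{w},\bm{u}}:=\langle\bm{h},\bm{w}\rangle+\langle\bm{g},\bm{u}\rangle$, where $\bm{h}\in\mathbb{R}^{N}$ and $\bm{g}\in\mathbb{R}^{M}$ are independent standard Gaussian vectors. Its hypotheses are verified on the increments: for fixed $\bm{w}$ the two processes have the \emph{same} increment variance in $\bm{u}$ (both equal $\|\bm{u}-\bm{u}'\|_2^2$), while for $\bm{w}\neq\bm{w}'$ the needed bound $\mathbb{E}|X_{\bm{w},\bm{u}}-X_{\bm{w}',\bm{u}'}|^{2}\le\mathbb{E}|Y_{\bm{w},\bm{u}}-Y_{\bm{w}',\bm{u}'}|^{2}$ reduces, on the unit spheres, to the Cauchy--Schwarz inequality $(1-\langle\bm{u},\bm{u}'\rangle)(1-\langle\bm{w},\bm{w}'\rangle)\ge 0$. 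Gordon's inequality then yields, after evaluating the decoupled min--max and using $-\bm{h}\stackrel{d}{=}\bm{h}$,
\[
  \mathbb{E}\,\mu\ \ge\ \mathbb{E}\Bigl[\min_{\bm{w}\in\mathcal{S}}\max_{\bm{u}\in\mathbb{S}^{M-1}}Y_{\bm{w},\bm{u}}\Bigr]
   \ =\ \mathbb{E}\|\bm{g}\|_2+\mathbb{E}\Bigl[\min_{\bm{w}\in\mathcal{S}}\langle\bm{h},\bm{w}\rangle\Bigr]
   \ =\ \mathbb{E}\|\bm{g}\|_2-w(\mathcal{S}) .
\]
Since $\mathbb{E}\|\bm{g}\|_2$ is the mean of a $\chi_M$ variable, the standard estimate $\mathbb{E}\|\bm{g}\|_2\ge\sqrt{M}-\tfrac{1}{2\sqrt{M}}$ gives $\mathbb{E}\mu\ge a$.

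For the deviation, I would observe that the map $G\mapsto\min_{\bm{w}\in\mathcal{S}}\|G\bm{w}\|_2$ is $1$-Lipschitz in the Frobenius norm of $G$ (each $\bm{w}\mapsto\|G\bm{w}\|_2$ is $1$-Lipschitz because $\|\bm{w}\|_2=1$, and a pointwise infimum of $1$-Lipschitz functions is $1$-Lipschitz), so the Gaussian concentration of measure inequality gives $P\bigl(\mu\le\mathbb{E}\mu-t\bigr)\le e^{-t^{2}/2}$ for every $t>0$. Taking $t=a$ and using $\mathbb{E}\mu\ge a$,
\[
  P(Y\cap\mathcal{S}=\emptyset)=P(\mu>0)\ \ge\ 1-P\bigl(\mu\le\mathbb{E}\mu-a\bigr)\ \ge\ 1-e^{-a^{2}/2} .
\]
The form stated in the theorem then follows from the elementary inequality $e^{-a^{2}/2}\le 3.5\,e^{-a/18}$, valid for all $a>0$ because $-a^{2}/2+a/18\le 1/648<\ln 3.5$.

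The step I expect to be the main obstacle is the Gordon comparison: one must choose the auxiliary process correctly and arrange its hypotheses carefully. The subtlety is that $X_{\bm{w},\bm{u}}$ and $Y_{\bm{w},\bm{u}}$ do \emph{not} have equal variances on the spheres, so Slepian's lemma does not apply and one genuinely needs the asymmetric (minimum over $\bm{w}$, maximum over $\bm{u}$) form of Gordon's inequality, with the maximization variable $\bm{u}$ ranging over the \emph{full} sphere $\mathbb{S}^{M-1}$ so that the increment conditions hold. Everything else --- the reduction to $\{\mu>0\}$, the estimate of $\mathbb{E}\|\bm{g}\|_2$, the Lipschitz/concentration step, and the cosmetic manipulation producing the constants $3.5$ and $1/18$ --- is routine.
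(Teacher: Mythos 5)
The paper does not actually prove this statement --- it is imported verbatim from Gordon's work \cite{Gordon} and used as a black box --- so there is no internal proof to compare against; what you have written is, in effect, the canonical proof of the cited theorem, and it is correct. Your route (realize $Y=\ker G$ for a Gaussian $G\in\mathbb{R}^{M\times N}$, reduce disjointness to $\min_{\bm{w}\in\mathcal{S}}\|G\bm{w}\|_2>0$, lower-bound the expectation via Gordon's min--max comparison with the decoupled process $\langle\bm{h},\bm{w}\rangle+\langle\bm{g},\bm{u}\rangle$, then apply Gaussian concentration for the $1$-Lipschitz map $G\mapsto\min_{\bm{w}}\|G\bm{w}\|_2$) is exactly the argument behind Gordon's original result and its modern expositions, and your increment computations are right: for fixed $\bm{w}$ the $\bm{u}$-increments agree, and for $\bm{w}\neq\bm{w}'$ the gap between the two squared increments is $2(1-\langle\bm{w},\bm{w}'\rangle)(1-\langle\bm{u},\bm{u}'\rangle)\geq 0$. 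You correctly identified the one genuinely delicate point: the equal-variance (Slepian-type) covariance form of the comparison does not apply here (the variances are $1$ versus $2$ on the spheres), so one must invoke the increment/Sudakov--Fernique form of Gordon's min--max inequality, with $\bm{u}$ ranging over all of $\mathbb{S}^{M-1}$; strictly speaking one also extends from finite index sets to general $\mathcal{S}$ by taking suprema over countable dense subsets, and for non-compact $\mathcal{S}$ one only needs the inclusion $\{\min>0\}\subseteq\{Y\cap\mathcal{S}=\emptyset\}$, which is the direction you use. Two small bonuses of your argument: the estimate $\mathbb{E}\|\bm{g}\|_2\geq M/\sqrt{M+1}\geq\sqrt{M}-\tfrac{1}{2\sqrt{M}}$ matches the constant in the statement, and you actually obtain the stronger bound $1-e^{-a^2/2}$ with $a=(\sqrt{M}-\tfrac{1}{2\sqrt{M}})-w(\mathcal{S})$, from which the stated bound $1-3.5\,e^{-a/18}$ (as written in the paper, with a linear rather than quadratic exponent) follows by your elementary comparison $e^{-a^2/2}<3.5\,e^{-a/18}$.
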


If the elements of the measurement matrix $A$  are i.i.d. Gaussian random variables, then the null space of $A$ is a random $(N-M)$-dimensional subspace distributed uniformly in the Grassmanian with respect to the Haar measure (see \cite{StojnicThresholds}). To prove the null space condition in Theorem \ref{thm:TVnullspacecondition1} holds with high probability, we show that the Gaussian width  $w(\mathcal{S})$ is in the order of $\sqrt{M}$ for the set
$$
\mathcal{S}=\{\bm{x}~:~\|\bm{x}\|_2= 1,\quad\mbox{and}\quad ~\exists \mathcal{K}\subset\{1,2,\ldots,N\}\mbox{~s.t.~}|\mathcal{K}|\leq K,~\|(D\bm{x})_{\mathcal{K}}\|_1\geq \|(D\bm{x})_{\mathcal{K}^c}\|_1\}.
$$

%Given any vector $\bm{x}$, we define $\|\bm{x}\|_{(K)}=\sqrt{\sum_{i=1}^{K}|x_i|^2}$ where $x_i$ is the $i$-th largest component of $\bm{x}$ in magnitude. It is obvious that $\|\bm{x}\|_{(K)}$ is indeed a norm, and its dual norm is defined as
%$$
%\|\bm{x}\|_{(K)}^*:=\max_{\|\bm{y}\|_{(K)}\leq 1}\langle \bm{x},\bm{y} \rangle.
%$$
%It is straightforward that
%$$
%\langle\bm{x},\bm{y}\rangle\leq \|\bm{x}\|_{(K)} \|\bm{y}\|_{(K)}^*.
%$$

For any $\bm{x}\in\mathcal{S}$ and a set $\mathcal{K}$ that satisfy $\|(D\bm{x})_{\mathcal{K}}\|_1\geq \|(D\bm{x})_{\mathcal{K}^c}\|_1$, we have that
$$
\|(D\bm{x})_{\mathcal{K}^c}\|_1\leq \|(D\bm{x})_{\mathcal{K}}\|_1\leq \sqrt{K}\|(D\bm{x})_{\mathcal{K}}\|_2
\leq\sqrt{K}\|D\bm{x}\|_2\leq 2\sqrt{K}\|\bm{x}\|_2=2\sqrt{K}.
$$
This implies $\|D\bm{x}\|_1\leq 4\sqrt{K}$ and therefore
%we partition $\bm{y}=D\bm{x}$ as
%$\bm{y}=[\bm{y}_0~\bm{y}_1~\ldots~\bm{y}_P]$, where $\bm{y}_0$ are the components of $\bm{y}$ on $\mathcal{K}$, $\bm{y}_1$ are the first $K$ largest components in magnitude on $\mathcal{K}^c$ and $\bm{y}_2$ are the next $K$ largest, and so on. Then, we have
%$\|\bm{y}_{i+1}\|_2\leq \sqrt{K(\|\bm{y}_i\|_1/K)^2} = \|\bm{y}_i\|_1/\sqrt{K}$. By summing it over all $i$, we get
%\begin{equation}\label{Eq:sumy2P}
%\sum_{i=2}^{P}\|\bm{y}_i\|_2\leq \frac{1}{\sqrt{K}}\sum_{i=1}^{P-1}\|\bm{y}_i\|_1\leq \frac{1}{\sqrt{K}}\|(D\bm{x})_{\mathcal{K}^c}\|_1
%\leq \frac{1}{\sqrt{K}}\|(D\bm{x})_{\mathcal{K}}\|_1
%\leq \|(D\bm{x})_{\mathcal{K}}\|_2\leq 2\|\bm{x}\|_2= 2.
%\end{equation}
%Also, we have $\|\bm{D}\bm{x}\|_2\leq 2\|\bm{x}\|_2=2$, which implies
%$\|\bm{y}_0\|_2^2+\|\bm{y}_1\|_2^2\leq 4$. Thus, for any vector $\bm{z}$ satisfying $\|\bm{z}\|_{(K)}\leq 1$, we partition it according to the $\bm{y}$ as $\bm{z}=[\bm{z}_0~\bm{z}_1~\ldots~\bm{z}_P]$ and it holds that $\|\bm{z}_i\|_2\leq 1$. Therefore,
%\begin{equation*}
%\begin{split}
%\langle D\bm{x},\bm{z} \rangle =\langle\bm{y},\bm{z}\rangle
%&=\langle\bm{y}_0,\bm{z}_0\rangle+\langle\bm{y}_1,\bm{z}_1\rangle+\sum_{i=2}^P\langle\bm{y}_i,\bm{z}_i\rangle\cr
%&\leq \sqrt{\|\bm{y}_0\|_2^2+\|\bm{y}_1\|_2^2}\cdot\sqrt{\|\bm{z}_0\|_2^2+\|\bm{z}_1\|_2^2}
%+ \sum_{i=2}^P\|\bm{y}_i\|_2\cdot\max_{2\leq i\leq P}\|\bm{z}_i\|_2\cr
%&\leq 2\sqrt{2} + 2.
%\end{split}
%\end{equation*}
%This implies that $\|D\bm{x}\|_{(K)}^* = \sup_{\|\bm{z}\|_{(K)}\leq 1}\langle D\bm{x},\bm{z}\rangle\leq 2\sqrt{2}+2$
and further
$$
\mathcal{S}\subset\widetilde{\mathcal{S}}:=\{\bm{x}~:~\|\bm{x}\|_2\leq 1,~~\|D\bm{x}\|_1\leq 4\sqrt{K}\}.
$$
In the following, we estimate the Gaussian width of $\widetilde{\mathcal{S}}$. We only consider the case that $N=2^L$, and the proof the other cases are essentially the same and does not change only the order the the Gaussian width.

For any $\bm{x}\in\widetilde{\mathcal{S}}$, we decompose $\bm{x}$ according to Haar wavelet transform as
\begin{equation}\label{Eq:Haar}
\bm{x} =\hat{\bm{z}}^{(1)}+\ldots+\hat{\bm{z}}^{(L)}+\hat{\bm{y}}^{(L)},
\end{equation}
where
$$
\hat{\bm{z}}^{(\ell)}=\bm{z}^{(\ell)}\otimes
[\underbrace{1~\ldots~1}_{2^{\ell-1}}~\underbrace{-1~\ldots~-1}_{2^{\ell-1}}],\quad
\bm{z}^{(\ell)}=[z_1^{(\ell)}~z_2^{(\ell)}~\ldots~z_{N/{2^\ell}}^{(\ell)}]
$$
and
$$
\hat{\bm{y}}^{(L)}=\bm{y}^{(L)}\otimes [1~1~\ldots~1],\quad \bm{y}^{(L)}=[y_1^{(L)}].
$$
Here $\otimes$ is the Kronecker product, i.e., $\bm{a}\otimes\bm{b}:=[a_1\bm{b}~a_2\bm{b}~\ldots~a_n\bm{b}]$. The decomposition \eqref{Eq:Haar} is done recursively as follows. We first decompose $\bm{x}=\hat{\bm{y}}^{(1)}+\hat{\bm{z}}^{(1)}$, where
$$
\hat{\bm{y}}^{(1)} = \bm{y}^{(1)}\otimes [1~1],
%[y_1^{(1)}~y_1^{(1)}~y_2^{(1)}~y_2^{(1)}~\ldots~\ldots~y_{N/2}^{(1)}~y_{N/2}^{(1)}],
\qquad \bm{y}^{(1)}=[y_1^{(1)}~y_2^{(1)}~\ldots~y_{N/2}^{(1)}],\quad y_i^{(1)} = \frac{x_{2i-1}+x_{2i}}{2},
$$
and
$$
\hat{\bm{z}}^{(1)} = \bm{z}^{(1)}\otimes [1~-1],\qquad \bm{z}^{(1)}=[z_1^{(1)}~z_2^{(1)}~\ldots~z_{N/2}^{(1)}],\quad
z_i^{(1)} = \frac{x_{2i-1}-x_{2i}}{2}.
$$
Then, we further decompose
$$
\hat{\bm{y}}^{(1)}=\hat{\bm{y}}^{(2)}+\hat{\bm{z}}^{(2)},
$$
where
$$
\hat{\bm{y}}^{(2)} = \bm{y}^{(2)}\otimes [1~1~1~1],\qquad \bm{y}^{(2)}=[y_1^{(2)}~y_2^{(2)}~\ldots~y_{N/4}^{(2)}],\quad
y_i^{(2)} = \frac{y_{2i-1}^{(1)}+y_{2i}^{(1)}}{2},
$$
and
$$
\hat{\bm{z}}^{(2)} = \bm{z}^{(2)}\otimes [1~1~-1~-1],\qquad \bm{z}^{(2)}=[z_1^{(2)}~z_2^{(2)}~\ldots~z_{N/4}^{(2)}],\quad
z_i^{(2)} = \frac{y_{2i-1}^{(1)}-y_{2i}^{(1)}}{2}.
$$
Generally, at level $\ell$, we have that
$$
\hat{\bm{y}}^{(\ell)}=\bm{y}^{(\ell)}\otimes
[\underbrace{1~\ldots~1}_{2^{\ell}}], \qquad \bm{y}^{(\ell)}=[y_1^{(\ell)}~y_2^{(\ell)}~\ldots~y_{N/{2^\ell}}^{(\ell)}]
$$
we decompose it as
$$
\hat{\bm{y}}^{(\ell)}=\hat{\bm{y}}^{(\ell+1)}+\hat{\bm{z}}^{(\ell+1)},
$$
where
$$
\hat{\bm{y}}^{(\ell+1)}=\bm{y}^{(\ell+1)}\otimes
[\underbrace{1~\ldots~1}_{2^{\ell+1}}],\qquad \bm{y}^{(\ell+1)}=[y_1^{(\ell+1)}~z_2^{(\ell+1)}~\ldots~y_{N/{2^{\ell+1}}}^{(\ell+1)}],\quad
y_i^{(\ell+1)} = \frac{y_{2i-1}^{(\ell)}+y_{2i}^{(\ell)}}{2},
$$
and
$$
\hat{\bm{z}}^{(\ell+1)}=\bm{z}^{(\ell+1)}\otimes
[\underbrace{1~\ldots~1}_{2^{\ell}}~\underbrace{-1~\ldots~-1}_{2^{\ell}}],\qquad \bm{z}^{(\ell+1)}=[z_1^{(\ell+1)}~z_2^{(\ell+1)}~\ldots~z_{N/{2^{\ell+1}}}^{(\ell+1)}],\quad
z_i^{(\ell+1)} = \frac{y_{2i-1}^{(\ell)}-y_{2i}^{(\ell)}}{2}.
$$

The decomposition \eqref{Eq:Haar} has the following properties.
\begin{itemize}
\item Obviously, components in decomposition \eqref{Eq:Haar} are orthogonal to each others. Consequently,
$$
\|\bm{x}\|_2^2 = \|\hat{\bm{z}}^{(1)}\|_2^2 + \|\hat{\bm{z}}^{(2)}\|_2^2 +\ldots + \|\hat{\bm{z}}^{(L)}\|_2^2 + \|\hat{\bm{y}}^{(L)}\|_2^2 =
\sum_{\ell=1}^{L}\left(2^{\ell}\|\bm{z}^{(\ell)}\|_2^2\right) + 2^L\|\bm{y}\|_2^2.
$$
Since $\bm{x}\in\tilde{\mathcal{S}}$ implies $\|\bm{x}\|_2^2\leq 1$, we have
\begin{equation}\label{Eq:Prop1}
\sum_{\ell=1}^{L}\left(2^{\ell}\|\bm{z}^{(\ell)}\|_2^2\right) + 2^L\|\bm{y}^{(L)}\|_2^2\leq 1.
\end{equation}
\item It can be shown that
$$
\|D\hat{\bm{y}}^{(\ell)}\|_1\leq \|D\hat{\bm{y}}^{(\ell-1)}\|_1
$$
and, therefore,
\begin{equation}\label{Eq:Prop2}
\|\bm{z}^{(\ell)}\|_1\leq \|D\hat{\bm{y}}^{(\ell-1)}\|_1/2\leq 2\sqrt{K}.
\end{equation}
Indeed, let $\bm{u}$ be satisfying $\|D\hat{\bm{y}}^{(\ell)}\|_1=\langle \bm{u},D\hat{\bm{y}}^{(\ell)}\rangle$ and $\|\bm{u}\|_\infty\leq 1$, and we then have
\begin{equation*}
\begin{split}
\langle\bm{u},D\hat{\bm{y}}^{(\ell)}\rangle
&= \sum_{i=1}^{N/2^{\ell}-1} \left(u_{i2^\ell}\cdot(y_{i+1}^{(\ell)}-y_{i}^{(\ell)})\right)
=\sum_{i=1}^{N/2^{\ell}-1}\left((u_{i2^{\ell}})\cdot\left(\frac{y_{2i+2}^{(\ell-1)}+y_{2i+1}^{(\ell-1)}}{2}
-\frac{y_{2i}^{(\ell-1)}+y_{2i-1}^{(\ell-1)}}{2}\right)\right)\cr
&=\sum_{i=1}^{N/2^{\ell}-1}\left(u_{i2^{\ell}}\cdot\left(\frac{y_{2i+2}^{(\ell-1)}-y_{2i+1}^{(\ell-1)}}{2}
+(y_{2i+1}^{(\ell-1)}-y_{2i}^{(\ell-1)})
+\frac{y_{2i}^{(\ell-1)}-y_{2i-1}^{(\ell-1)}}{2}\right)\right)\cr
&=\sum_{i=1}^{N/2^{\ell}-1}\left(u_{i2^{\ell}}\cdot\left(\frac{y_{2i+2}^{(\ell-1)}-y_{2i+1}^{(\ell-1)}}{2}\right)\right)
+\sum_{i=1}^{N/2^{\ell}-1}\left(u_{i2^{\ell}}\cdot(y_{2i+1}^{(\ell-1)}-y_{2i}^{(\ell-1)})\right)\cr
&\qquad +\sum_{i=1}^{N/2^{\ell}-1}\left(u_{i2^{\ell}}\cdot\left(\frac{y_{2i}^{(\ell-1)}-y_{2i-1}^{(\ell-1)}}{2}\right)\right)\cr
&=\frac{u_{2^{\ell}}}{2}\cdot\left(y_{2}^{(\ell-1)}-y_{1}^{(\ell-1)}\right)+
\sum_{i=2}^{N/2^{\ell}-1}\left(\frac{u_{i2^{\ell}}+u_{(i-1)2^{\ell}}}{2}\cdot\left(y_{2i}^{(\ell-1)}-y_{2i-1}^{(\ell-1)}\right)\right)\cr
&\qquad+\frac{u_{N-2^{\ell}}}{2}\cdot\left(y_{N/2^{\ell-1}}^{(\ell-1)}-y_{N/2^{\ell-1}-1}^{(\ell-1)}\right)
+\sum_{i=1}^{N/2^{\ell}-1}\left(u_{i2^{\ell}}\cdot(y_{2i+2}^{(\ell-1)}-y_{2i+1}^{(\ell-1)})\right)\cr
& = \langle \tilde{\bm{u}}, D\hat{\bm{y}}^{(\ell-1)} \rangle,
\end{split}
\end{equation*}
where
$$
\tilde{\bm{u}}=\left[\bm{0}~\frac{u_{2^\ell}}{2}~\bm{0}~u_{2^\ell}~\bm{0}
~\frac{u_{22^{\ell}}+u_{2^{\ell}}}{2}~\bm{0}~u_{22^\ell}~\bm{0}
~\frac{u_{32^{\ell}}+u_{22^{\ell}}}{2}~\bm{0}~u_{32^\ell}~\bm{0}~
\ldots\ldots\ldots~\bm{0}
~\frac{u_{N-2^{\ell}}}{2}~\bm{0}\right].
$$
Since
$$
\left|\frac{u_{i2^{\ell}}+u_{(i-1)2^{\ell}}}{2}\right|
\leq \frac{|u_{i2^{\ell}}|+|u_{(i-1)2^{\ell}}|}{2},
$$
we have $\|\tilde{\bm{u}}\|_\infty\leq\|\bm{u}\|_\infty\leq 1$. This leads to
$$
\|D\hat{\bm{y}}^{(\ell)}\|_1=\langle \bm{u},D\hat{\bm{y}}^{(\ell)}\rangle
=\langle \tilde{\bm{u}},D\hat{\bm{y}}^{(\ell-1)}\rangle
\leq \|D\hat{\bm{y}}^{(\ell-1)}\|_1.
$$
\end{itemize}

Now we are ready to estimate the Gaussian width of $\tilde{\mathcal{S}}$. Let $\bm{g}$ be a vector whose entries are i.i.d. Gaussian random variables with mean $0$ and variance $1$. Since \eqref{Eq:Prop1} implies $\|\bm{z}^{(\ell)}\|_2\leq \frac{1}{\sqrt{2^{\ell}}}$, we have, by Cauchy-Schwartz inequality,
$
\|\bm{z}^{(\ell)}\|_1
%=\langle \bm{u},\bm{z}^{\ell}\rangle
%\leq \sum_{i=1}^{\lceil(N/2^{\ell})/K\rceil}\|\bm{u}_i\|_2\|\bm{z}^{\ell}_i\|_2
%\leq \sum_{i=1}^{\lceil(N/2^{\ell})/K\rceil}\|\bm{z}^{\ell}_i\|_2
%\leq \sqrt{\frac{N/K}{2^{\ell}}+1}\|\bm{z}^{(\ell)}\|_2\leq\frac{\sqrt{N/K}}{2^{\ell}}+\frac{1}{\sqrt{2^{\ell}}}.
\leq \sqrt{\frac{N}{2^\ell}}\|\bm{z}^{(\ell)}\|_2\leq \frac{\sqrt N}{2^\ell}$.
This together with \eqref{Eq:Prop2} implies that
$$
\|\bm{z}^{(\ell)}\|_1\leq \min\left\{\frac{\sqrt{N}}{2^{\ell}}, 2\sqrt{K} \right\}.
$$
Then,
$$
\langle \hat{\bm{z}}^{(\ell)},\bm{g} \rangle
=\langle \bm{z}^{(\ell)},\bm{g}^{(\ell)} \rangle
\leq \|\bm{z}^{(\ell)}\|_1\|\bm{g}^{(\ell)}\|_\infty.
$$
Here
$$
\bm{g}^{(\ell)}=\left[\sum_{i=1}^{2^{\ell-1}}(g_i-g_{i+2^{\ell-1}})~\sum_{i=1}^{2^{\ell-1}}(g_{i+2^\ell}-g_{i+2^\ell+2^{\ell-1}})
~\ldots~\sum_{i=1}^{2^{\ell-1}}(g_{i+N-2^\ell}-g_{i+N-2^\ell+2^{\ell-1}})\right]
:=[g_1^{(\ell)}~g_2^{(\ell)}~\ldots~g_{N/2^{\ell}}^{(\ell)}]
$$

In the following, we estimate $E(\|\bm{g}^{(\ell)}\|_\infty)$. Notice that the components in $\bm{g}^{(\ell)}$ are i.i.d. random variables that follow $\mathcal{N}(0,2^\ell)$.
%$$
%E(\max_{0\leq i \leq N/2^{\ell}-1}|g_{2^{\ell}i+1}+\ldots+g_{2^{\ell}i+2^{\ell-1}}-g_{2^{\ell}i+2^{\ell-1}+1}-\ldots-g_{2^{\ell}(i+1)}|) = E(\max_{1\leq i\leq N/2^{\ell}}|\tilde{g}_i|),\quad \tilde{g}_i\sim\mathcal{N}(0,2^\ell).
%$$
The following argument follows from Lemma 4.4 of Rudelson and Vershynin's paper \cite{RV2008}. Let $p$ be a large enough number that is determined later. Then
\begin{equation*}
\begin{split}
E\left(\|\bm{g}^{(\ell)}\|_\infty\right)&\leq E\left(\left(\sum_{i}|g_i^{(\ell)}|^p\right)^{1/p}\right)
\leq \left(N/2^{\ell}\right)^{1/p} \left(E\left(|{g}_i^{(\ell)}|^p\right)\right)^{1/p}\cr
&\leq \sqrt{2^{\ell}} \left(N/2^{\ell}\right)^{1/p}\left(2^{p/2}\frac{\Gamma(p/2+1/2)}{\Gamma(1/2)}\right)^{1/p}
\cr
&\leq \sqrt{2^{\ell}}\left(N/2^{\ell}\right)^{1/p}\left(\frac{p+1}{e}\right)^{1/2}(1+o(1)).%\cr
%&= (N/2^{\ell})^{1/p} \left( 2^{p\ell/2}\frac{2^{p/2}\Gamma\left(\frac{p+1}{2}\right)}{\sqrt{\pi}}\right)^{1/p}
%= \sqrt{2^\ell}(N/2^{\ell})^{1/p} \left(\frac{2^{p/2}\Gamma\left(\frac{p+1}{2}\right)}{\sqrt{\pi}}\right)^{1/p}\cr
%%&\leq \sqrt{2^\ell}(N/2^{\ell})^{1/p} \left(\frac{2^{p/2}\sqrt{\frac{2\pi}{\frac{1+p}{2}}}\left(\frac{p+1}{2e}\right)^{(p+1)/2}(1+o(2/(p+1)))}{\sqrt{\pi}}\right)^{1/p}
%&\leq \sqrt{2^\ell}(N/2^{\ell})^{1/p} \left(\frac{2^{p/2}e\left(\frac{1+p}{2e}\right)^{p/2}\sqrt{e}}{\sqrt{\pi}}\right)^{1/p}\cr
%&= \sqrt{2^\ell}\left(\frac{N}{2^{\ell}}\frac{e^{3/2}}{\sqrt{\pi}}\right)^{1/p} \left(\frac{1+p}{e}\right)^{1/2}\cr
\end{split}
\end{equation*}
Choose $p=2\ln\left(\frac{N/2^{\ell}}{K}\right)$, and we obtain
$$
E\left(\|\bm{g}^{(\ell)}\|_\infty\right)\leq
\sqrt{2^\ell} \left(p+1\right)^{1/2} =\sqrt{2^{\ell}}\sqrt{1+2\ln\left(N/2^{\ell}\right)}
=\sqrt{2^{\ell}}\sqrt{2\ln\left(e^{1/2}N/2^{\ell}\right)}
$$

Therefore,
\begin{equation}\label{Eq:Supz}
\begin{split}
E\left(\sup_{\bm{x}\in\tilde{\mathcal{S}}}\langle \hat{\bm{z}}^{\ell},\bm{g} \rangle\right)
&\leq
E\left(\sup_{\bm{x}\in\tilde{\mathcal{S}}}\|\hat{\bm{z}}^{\ell}\|_1\|\bm{g}^{(\ell)}\|_\infty\right)
\leq
\min\left\{\frac{\sqrt{N}}{2^{\ell}}, 2\sqrt{K}\right\}\cdot E\left(\|\bm{g}^{(\ell)}\|_{(K)}\right)\cr
&\leq \min\left\{\sqrt{\frac{N}{2^{\ell}}}, 2\sqrt{2^{\ell} K}\right\}\cdot\sqrt{2\ln\left(e^{1/2}N/2^{\ell}\right)}.
\end{split}
\end{equation}
%Here we have used the fact that
%$$
%E\left(\left|\sum_{i=1}^Mg_i\right|\right) = \sqrt{\frac{2M}{\pi}}
%$$
%and
%$$
%E(\max_{1\leq i\leq M}|G_i|)\leq \ln M\cdot E(|G_i|)
%$$
%for i.i.d. random variables $G_i$.
%
%Furthermore, since $2^{L}\|\bm{y}^{(L)}\|_2^2\leq 1$, we have $|y_1^{(L)}|\leq 1/\sqrt{2^L}=1/\sqrt{N}$ and then
%$$
%E\left(\sup_{\bm{x}\in\tilde{\mathcal{S}}}\langle \hat{\bm{y}}^{(L)},\bm{g} \rangle\right)
%\leq  E\left(\sup_{\bm{x}\in\tilde{\mathcal{S}}} |y_1^{(L)}|\cdot\left|\sum_{i=1}^N g_i\right| \rangle\right)\leq 1/\sqrt{N} E\left(\left|\sum_{i=1}^N g_i\right| \rangle\right) = 1/\sqrt{N}\sqrt{\frac{2N}{\pi}} = \sqrt{\frac{2}{\pi}}.
%$$
This together with \eqref{Eq:Haar} implies that
$$
E_{\bm{g}}\left(\sup_{\bm{x}\in\tilde{\mathcal{S}}}\langle\bm{x},\bm{g}\rangle\right)
\leq \sqrt{\frac{2}{\pi}} + \sum_{\ell=1}^{L}\min\left\{\sqrt{\frac{N}{2^{\ell}}}, 2\sqrt{2^{\ell} K}\right\}\cdot\sqrt{2\ln\left(e^{1/2}N/2^{\ell}\right)}
$$
Now we estimate the constant $\sum_{\ell=1}^{L}\min\left\{\sqrt{\frac{N}{2^{\ell}}}, 2\sqrt{2^{\ell} K}\right\}\cdot\sqrt{2\ln\left(e^{1/2}N/2^{\ell}\right)}$. Let $L_0$ be the maximum integer that satisfies
$\sqrt{\frac{N}{2^{L_0}}}\geq2\sqrt{2^{L_0}K}$, which leads to $2^{L_0} \leq \frac{1}{2}\sqrt{\frac{N}{K}}$. Since $L_0$ is the maximum integer, we have
$\frac{1}{2}\sqrt{\frac{N}{K}}\leq 2^{L_0+1}$.
It is obviously that $\min\left\{\sqrt{\frac{N}{2^{\ell}}}, 2\sqrt{2^{\ell}K}\right\}=2\sqrt{2^{\ell}K}$ if $\ell\leq L_0$ and $\min\left\{\sqrt{\frac{N}{2^{\ell}}}, 2\sqrt{2^{\ell}K}\right\}=\sqrt{\frac{N}{2^{\ell}}}$ otherwise. Therefore, if $N>1$ and $K>1$, then
\begin{equation}
\begin{split}
&\sum_{\ell=1}^{L}\left(\min\left\{\sqrt{\frac{N}{2^{\ell}}}, 2\sqrt{2^{\ell}K}\right\}\cdot \sqrt{2\ln\left(e^{1/2}N/2^{\ell}\right)}\right)\cr
\leq&\sqrt{2\ln\left(e^{1/2}N\right)} \left(2\sqrt{K}\sum_{\ell=1}^{L_0}\sqrt2^{\ell}+\sqrt{N}\sum_{\ell=L_0+1}^{L}\left(\frac{1}{\sqrt2}\right)^{\ell}\right)\cr
=& \sqrt{2\ln\left(e^{1/2}N\right)}  \left(2\sqrt2\sqrt{K}\frac{\sqrt{2}^{L_0}-1}{\sqrt2-1} +\sqrt{N}\left(\frac{1}{\sqrt2}\right)^{L_0+1}\frac{1-\left(\frac{1}{\sqrt2}\right)^{L-L_0}}{1-\frac{1}{\sqrt2}}\right)\cr
\leq & \sqrt{2\ln\left(e^{1/2}N\right)} \left(\left(\frac{2\sqrt2}{\sqrt2-1}\right)\sqrt{K}\sqrt{2}^{L_0}+\sqrt{\frac{N}{2^{L_0+1}}}\frac{1}{1-\frac{1}{\sqrt2}}\right)-\sqrt{\frac{2}{\pi}}\cr
\leq &\sqrt{2\ln\left(e^{1/2}N\right)}\left((4+2\sqrt2)\sqrt{K}\sqrt{\frac{1}{2}\sqrt{\frac{N}{K}}}+(2+\sqrt2)\sqrt{\frac{N}{\frac{1}{2}\sqrt{\frac{N}{K}}}}\right)-\sqrt{\frac{2}{\pi}}\cr
\leq &\left(4\sqrt2+4\right)(NK)^{1/4}\sqrt{2\ln\left(e^{1/2}N\right)} - \sqrt{\frac{2}{\pi}}
\end{split}
\end{equation}

Finally, we get
\begin{equation*}
\begin{split}
E\left(\sup_{\bm{x}\in\tilde{\mathcal{S}}}\langle \bm{x},\bm{g} \rangle\right)
=(4\sqrt2+4)(NK)^{1/4}\sqrt{2\ln\left(e^{1/2}N\right)}.
\end{split}
\end{equation*}
Since the Gaussian width in Theorem \ref{thm:escapethroughmesh} is of the order $\sqrt{M}$, we have
$$
M\sim (NK)^{1/2}\ln N.
$$

\section{Lower bound on the number of measurements for $1$-dimensional signal vector}
\label{sec:lowerbound}

\begin{theorem}
The number of random Gaussian measurements needed to guarantee, with high probability, the null space condition \ref{thm:TVnullspacecondition1} for signals with $K$-sparse gradient  is at least $\frac{\pi}{16} (NK)^{\frac{1}{2}}-O(\sqrt{N})$.
\label{thm:lowerbound_numbermeasurement}
\end{theorem}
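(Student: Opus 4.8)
The plan is to convert the statement into a lower bound on the Gaussian width of the set $\mathcal{S}$ from Section~\ref{sec:escapefrommesh} and then combine it with the converse of the ``escape through the mesh'' phenomenon. By Theorem~\ref{thm:TVnullspacecondition1}, the null space condition holds precisely when the random $(N-M)$-dimensional subspace $Y=\mathrm{null}(A)$ misses the cone $C=\{\lambda\bm{x}:\lambda\ge 0,\ \bm{x}\in\mathcal{S}\}$ generated by $\mathcal{S}$; since the defining inequalities of $\mathcal{S}$ are scale-invariant, $C$ is a cone and $Y\cap C=\{\bm{0}\}$ is equivalent to $Y\cap\mathcal{S}=\emptyset$. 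Gordon's escape theorem has a companion converse (the complementary half of Gordon's min-max comparison, see \cite{Gordon,RV2008,StojnicThresholds}): if $w(\mathcal{S})\ge \sqrt{M}+\varepsilon_N$ for a mild lower-order term $\varepsilon_N$, then $P(Y\cap\mathcal{S}\ne\emptyset)\to1$, i.e.\ the null space condition fails with high probability. Consequently, for the null space condition to hold with high probability one needs $M\ge w(\mathcal{S})^2-O(\sqrt{N})$, and it suffices to prove $w(\mathcal{S})^2\ge \tfrac{\pi}{16}(NK)^{1/2}-O(\sqrt{N})$.

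The core of the argument is therefore a lower bound $w(\mathcal{S})=E\big(\sup_{\bm{w}\in\mathcal{S}}\bm{h}^T\bm{w}\big)\ge \tfrac{\sqrt{\pi}}{4}(NK)^{1/4}-O(N^{1/4})$. I would fix the support set $\mathcal{K}=\{1,\dots,K\}$ and pass to $\mathcal{S}_0=\{\bm{x}:\|\bm{x}\|_2=1,\ \|(D\bm{x})_{\mathcal{K}}\|_1\ge\|(D\bm{x})_{\mathcal{K}^c}\|_1\}\subseteq\mathcal{S}$, so that $w(\mathcal{S})\ge w(\mathcal{S}_0)$. For a given realization of $\bm{h}$ I would build a competitor $\bm{x}(\bm{h})=\bm{x}^{\mathrm{head}}+\bm{x}^{\mathrm{tail}}\in\mathcal{S}_0$, where $\bm{x}^{\mathrm{head}}$ is supported on the first $K+1$ coordinates and alternates sign with a common amplitude, so that it uses $\ell_2$-budget $\|\bm{x}^{\mathrm{head}}\|_2=\sqrt{1-\rho^2}$ and attains gradient $\ell_1$-norm $\approx 2\sqrt{K}\sqrt{1-\rho^2}$ (the largest gradient per unit $\ell_2$-norm achievable with $K$ jumps); and $\bm{x}^{\mathrm{tail}}$ is supported on coordinates $K+1,\dots,N$ and is built from Haar wavelets at a single dyadic scale $2^{j}$ with $2^{j}\asymp\sqrt{N/K}$, with coefficients aligned to the corresponding Haar coefficients of $\bm{h}$ and rescaled to $\|\bm{x}^{\mathrm{tail}}\|_2=\rho$. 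Such an $\bm{x}^{\mathrm{tail}}$ has a dense (non-sparse) gradient, but its $\ell_1$-norm $\tau:=\|D\bm{x}^{\mathrm{tail}}\|_1$ can be tuned by the choice of scale so that $\tau\le 2\sqrt{K}\sqrt{1-\rho^2}=\|D\bm{x}^{\mathrm{head}}\|_1$, which puts $\bm{x}$ in $\mathcal{S}_0$; and a direct computation with single-scale Haar expansions gives $\bm{h}^T\bm{x}^{\mathrm{tail}}\asymp \rho^{1/2}\tau^{1/2}N^{1/4}$ in expectation, which after optimizing over $\rho$ (at $\rho^2=\tfrac12$) becomes $\asymp (NK)^{1/4}$. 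The terms $\bm{h}^T\bm{x}^{\mathrm{head}}$ and the cross term are nonnegative in expectation once the signs of the head spikes are aligned with $\bm{h}$, so they only help; tracking every constant --- the factors $\sqrt{2/\pi}$ from $E|\mathcal{N}(0,1)|$ in the $\ell_1$-norms of Haar-wavelet gradients and of the aligned coefficient vector, the ``$2\sqrt{K}$'' from the spike head, and the numerical constants in the escape converse --- is what produces the precise coefficient $\tfrac{\pi}{16}$, with the $O(\sqrt{N})$ absorbing boundary effects, the $\ell_2$-cost of the head, and the lower-order slack in the escape converse.

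The step I expect to be the main obstacle is making this Gaussian width lower bound genuinely of order $(NK)^{1/4}$ rather than the weaker $\sqrt{K}$ or $\sqrt{K\log(N/K)}$ that the obvious subsets of $\mathcal{S}$ deliver: unit signals whose gradient is exactly $K$-sparse, or piecewise-constant signals with $K$ pieces, only have Gaussian width $\Theta(\sqrt{K\log(N/K)})$, which is too small when $K\ll N$. The extra factor comes precisely from allowing the ``tail'' to carry a dense gradient whose total $\ell_1$-mass is still dominated, by a factor two, by the $K$ head jumps --- so the null space condition is violated --- while simultaneously contributing $\Theta((NK)^{1/4})$ to $\bm{h}^T\bm{x}$; choosing the dyadic scale of the tail correctly and verifying the $\ell_1$/$\ell_2$ bookkeeping is the delicate part. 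A secondary issue is that one needs a sufficiently quantitative converse to the escape theorem with explicit constants; if a ready-made statement is not convenient, it can be replaced by a direct Gaussian-comparison argument (Gordon's lower min-max inequality applied to $\sup_{\bm{w}\in\mathcal{S}_0}\bm{h}^T\bm{w}$) together with concentration of the supremum around its mean.
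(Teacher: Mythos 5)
Your Gaussian-width lower bound is essentially the same construction the paper uses (your single-scale Haar ``tail'' at scale $2^{j}\asymp\sqrt{N/K}$ with sign-aligned coefficients is, up to normalization, the paper's partition of $\{1,\dots,N\}$ into $H\asymp\sqrt{NK}$ constant blocks of length $L\asymp\sqrt{N/K}$ with signs matched to the block sums of $\bm{g}$, plus an alternating head of $K$ jumps), and that part of your plan would go through. The genuine gap is the step that converts the width bound into a lower bound on $M$: there is no ``companion converse'' to the escape-through-the-mesh theorem for the sets you apply it to. Both $\mathcal{S}$ and your restricted set $\mathcal{S}_0=\{\bm{x}:\|\bm{x}\|_2=1,\ \|(D\bm{x})_{\mathcal{K}}\|_1\ge\|(D\bm{x})_{\mathcal{K}^c}\|_1\}$ are \emph{non-convex} cones intersected with the sphere (the defining inequality is a difference of convex functions, and averaging two elements with opposite gradient signs on $\mathcal{K}$ exits the set), and for non-convex cones large Gaussian width does not force a random subspace to intersect: a union of exponentially many random lines has width $\Theta(\sqrt{N})$ yet is missed almost surely by any subspace of codimension $\ge 1$. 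The same obstruction blocks your fallback, since the direction of Gordon's min--max comparison that would show the null space \emph{hits} the set (rather than escapes it) is exactly the direction that requires convexity.

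The paper closes this hole by not working with $\mathcal{S}$ at all in the converse direction: it fixes one extremal signal $\bm{z}$ whose last $K$ entries alternate $\pm1$, observes that failure to recover this single $\bm{z}$ already negates the uniform null space condition, and works with the \emph{convex} descent cone $DC$ of $\bm{z}$ under the TV norm, which contains the constructed witnesses (with the head signs fixed by $\bm{z}$ rather than aligned to $\bm{h}$ --- harmless, since the head term is lower order). For a convex cone one has the statistical-dimension machinery of \cite{TroppEdge}: Proposition~10.1 gives $\delta(DC)\ge w(DC)^2\ge\frac{\pi}{16}(NK)^{1/2}$, and Theorem~II (the approximate kinematic formula) gives failure with probability $\ge 1-\eta$ whenever $M\le\delta(DC)-4\sqrt{\log(4/\eta)}\,\sqrt{N}$, which is precisely the $\frac{\pi}{16}(NK)^{1/2}-O(\sqrt{N})$ statement. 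So to repair your argument you should replace the ``converse escape'' step by this reduction to a single signal's convex descent cone and the statistical-dimension phase transition; with that substitution, your width computation supplies the needed input.
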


We will prove, when $K$ and $N$ are large enough,  the Gaussian width is lower bounded as
$$
E\left(\sup_{\bm{x}\in\mathcal{S}}<\bm{x},\bm{g}>\right)\geq \frac{\sqrt{\pi}}{4}\left(NK\right)^{1/4},
$$
where $\bm{g}\sim \mathcal{N}(0,I)$ and
$$
\mathcal{S}=\{\bm{x}~|~\|\bm{x}\|_2\leq 1~\mbox{and}~\exists \mathcal{K}\subset\{1,\ldots,N-1\}\mbox{~s.t.~}|
\mathcal{K}|\leq K,~\|(D\bm{x})_{\mathcal{K}}\|_1\geq\|(D\bm{x})_{\mathcal{K}^c}\|_1\}.
$$

We assume $K\leq N/3-1$.
Let $\bm{g}\in\mathbb{R}^{N}$ be fixed. Let $L$ be positive number that is to be determined later. We partition $\bm{g}$ as
$$
\bm{g}=[\bm{g}_1,\ldots,\bm{g}_{H},\bm{g}_{H+1}],
$$
where $H=\lfloor(N-\max\{K+1,L\})/L\rfloor$ and $\bm{g}_1,\ldots,\bm{g}_H\in\mathbb{R}^{L}$ and $\bm{g}_{H+1}$ is the remaining entries of $\bm{g}$.

Let $\bm{a}$ be a vector that has the same size as $\bm{g}_{H+1}$. Since the length of $\bm{g}_{H+1}$ is larger than $K+1$, we can define
$$
\bm{a}=[0,\ldots,0,\underbrace{1,-1,1,-1,\ldots,1,-1}_{K~\mathrm{terms}}]
$$
Therefore, the support $\mathcal{K}_0$ of $D\bm{a}$ has a cardinality $K$.

We define
$$
\tilde{g}=[\langle\bm{g}_1,\bm{1}\rangle~\ldots~\langle\bm{g}_{H},\bm{1}\rangle,\langle\bm{g}_{H+1},\bm{a}\rangle]^T
$$
and
$$
\bm{s}=\mathrm{sgn}(\tilde{\bm{g}})
$$
Let
$$
\bm{x}=[\nu s_1\bm{1}~\ldots~\nu s_{H}\bm{1}~\mu s_{H+1}\bm{a}],
$$
where $\mu$ and $\nu$ are positive numbers to be determined later.

Therefore, we have $\|\bm{x}\|_2\leq (\nu^2 N+\mu^2 K)^{1/2}$. In order $\|\bm{x}\|_2\leq 1$, we need
\begin{equation}\label{eq:l2cons}
\nu^2 N + \mu^2 K\leq 1.
\end{equation}
Moreover, from the construction, $\|(D\bm{x})_{\mathcal{K}_0}\|_1=(2K-1)\mu$, and $\|(D\bm{x})_{\mathcal{K}_0^c}\|_1\leq(2H-1)\nu\leq (2N/L)\nu$. In order that $\bm{x}\in\mathcal{S}$, we should have
\begin{equation}\label{eq:l1cons}
(2K-1)\mu\geq 2\nu N/L.
\end{equation}

We pick
$$
\mu=\frac{1}{\sqrt{2K}},\quad
\nu=\frac{1}{\sqrt{2N}},\quad
L=\sqrt{\frac{4NK}{(2K-1)^2}}.
$$
%$$
%\mu=\frac{1}{2\sqrt{K}},\quad
%\nu=\frac{1}{2\sqrt{N}},\quad
%L=\frac12\sqrt{\frac{N}{K}}.
%$$
Then, both \eqref{eq:l2cons} and \eqref{eq:l1cons} are satisfied, and hence $\bm{x}\in\mathcal{S}$. %Since we assume $K\leq N/3-1$, we have $H-1\geq \frac{4}{3}\sqrt{NK}-2\geq\sqrt{NK}$ if $NK\geq 81$.
Since we assume $K\leq N/3-1$, for any constant $\epsilon>0$, we have $H-1\geq \frac{2}{3} (1-\epsilon) \sqrt{NK}$ when $K$ and $N$ are large enough. For large enough $K$ and $N$, we finally have
$$
\langle\bm{x},\bm{g}\rangle = \mu|\tilde{g}_1|+\nu\sum_{i=2}^{N/L}|\tilde{g}_i|
$$
and thus
\begin{equation*}
\begin{split}
E\left(\sup_{\bm{x}\in\mathcal{S}}<\bm{x},\bm{g}>\right)&\geq
E\left( \mu|\tilde{g}_1|+\nu\sum_{i=2}^{N/L}|\tilde{g}_i|\right)
=\mu E(|\tilde{g}_1|)+\nu(H-1)E(|\tilde{g}_2|)\cr
%&\geq \nu(H-1)E(|\tilde{g}_2|)
%\geq \frac{1}{2\sqrt{N}}\cdot\sqrt{NK}\cdot\sqrt{\frac12\sqrt{\frac{N}{K}}}\cdot\sqrt{\pi/2}
%=\frac{\sqrt{\pi}}{4}\cdot(NK)^{1/4}.
&\geq \nu(H-1)E(|\tilde{g}_2|)
\geq \frac{1}{\sqrt{2N}}\cdot\frac{2}{3}(1-\epsilon)\sqrt{NK}\cdot\sqrt{\sqrt{\frac{4NK}{(2K-1)^2}}}\cdot\sqrt{\pi/2}\cr
&=\frac{(1-\epsilon)\sqrt{\pi}}{3} \left( \frac{4NK^3}{(2K-1)^2} \right)^{\frac{1}{4}}.
\end{split}
\end{equation*}
This is bigger than $\frac{\pi}{4} (NK)^{\frac{1}{4}}$ when $K$ and $N$ are large enough.

We now show that the number of random Gaussian measurements needed to guarantee, with high probability, the null space condition \ref{thm:TVnullspacecondition1} for signals with $K$-sparse gradient  is at least $\frac{\pi}{16} (NK)^{\frac{1}{2}}-O(\sqrt{N})$.

Let us consider one signal vector $\bm(z)$ whose last $K$ elements are $(-1,+1,-1,+1,\ldots,-1,+1)$, and other $(N-K)$ elements are zero. Then the descent cone $DC$ for $\bm{z}$ under TV minimization is a subset of the non-convex cone
$$
\mathcal{S}=\{\bm{x}~|~\exists \mathcal{K}\subset\{1,\ldots,N-1\}\mbox{~s.t.~}|
\mathcal{K}|\leq K,~\|(D\bm{x})_{\mathcal{K}}\|_1\geq\|(D\bm{x})_{\mathcal{K}^c}\|_1\}.
$$

Using the same derivation technique as above in this section, we have that the Gaussian width $w(DC)$ for the descent cone $DC$ is at least $\frac{\sqrt{\pi}}{4}\left(NK\right)^{1/4}$. Notice that the descent cone is a convex one, and a subset of the nonconvex cone
$$
\mathcal{S}=\{\bm{z}~|~\exists \mathcal{K}\subset\{1,\ldots,N-1\}\mbox{~s.t.~}|
\mathcal{K}|\leq K,~\|(D\bm{x})_{\mathcal{K}}\|_1\geq\|(D\bm{x})_{\mathcal{K}^c}\|_1\}.
$$

From Proposition 10.1 in \cite{TroppEdge}, the statistical dimension $\delta(DC)$ of the descent cone is at least $w^2(DC)$, which is $\frac{\pi}{16} (NK)^{\frac{1}{2}}$. From Theorem II in \cite{TroppEdge}, if the number of measurements $M \leq \delta(DC)-4\sqrt{\log(4/\eta)} \sqrt{N}$, then TV minimization succeeds with probability no bigger than $\eta$.  If we take $\eta$ small, with high probability, TV minimization fails to recover a signal vector with $K$-sparse gradient with high probability, if the number of measurements $M \leq \frac{\pi}{16} (NK)^{\frac{1}{2}}-4\sqrt{\log(4/\eta)} \sqrt{N}$.

In summary, we have the theorem concerning the lower bound on the number of measurements for TV minimization using random Gaussian matrices.

\subsection{Recovery Thresholds via the Grassmann Angle Framework}
\label{sec:Grass}

In previous subsections, we have used the ``Escape through the Mesh'' theorem to establish performance guarantees of TV minimization for signal recovery. In this subsection, we explore the Grassmann angle framework \cite{XuHassibi} to characterize performance guarantees of TV minimization for $1$-dimensional signal vectors. The upshot here is that the Grassmann angle framework gives explicitly computable thresholds on recoverable sparsity level $K$, when the number of measurements is proportionally growing with the signal dimension $N$.

Let us use $\mathcal{K}$ to denote the set of indices $i$'s such that $|x_{i+1}-x_i|$ is one of the $K$ terms on the left side of the inequality \ref{thm:TVnullspacecondition1}. Let us denote the set of indices ($i$'s and $(i+1)$'s ) involved in these $K$ terms as $\mathcal{DK}$. We note that the cardinality $|\mathcal{DK}|$ of $\mathcal{DK}$ is at most $2K$.

Then there exist at least $(N-1-3K)$ terms in the form of $|x_{i+1}-x_i|$ that do not involve any index in $\mathcal{DK}$. Among these $(N-1-3K)$ terms, we can at least choose $\frac{N-1-3K}{2}$ terms such that each of them involves different indices from $\mathcal{DK}$, {\it and} from each other. Let us use $\mathcal{KB}$ to denote the set of indices $i$'s such that $|x_{i+1}-x_i|$ is one of these $\frac{N-1-3K}{2}$ terms. By the triangle inequality,
$$
 \sum_{i \in \mathcal{K}} |x_{i+1}-x_i| \leq  2 \sum_{i \in \mathcal{DK}}|x_i|
$$
Then one sufficient condition for $TV$ minimization to work is
\begin{equation}\label{eq:relaxedcond}
 2 \sum_{i \in \mathcal{DK}}|x_i| \leq \sum_{i \in \mathcal{KB}} |x_{i+1}-x_i|
\end{equation}
holds for every vector $\bm{x}$ in the null space of the projection $A$. We call this condition $RelaxedNULL$ condition.

Since we are taking the projection $A$ uniformly over all the $m$-dimensional subspaces in $R^{n}$, the probability that $RelaxedNULL$ condition holds, is equivalent to the probability that
\begin{equation}\label{eq:chdimcond}
 2 \sum_{i \in \{1,2,..., |\mathcal{DK}|\}}|x_i| \leq \sqrt{2}\sum_{i \in \{|\mathcal{DK}|+1, ..., |\mathcal{DK}|+\frac{N-1-3K}{2}\}} |x_i|
\end{equation}
holds for {\it every} vector $\bm{x}$ in the null space of a uniform distributed $M'$-dimensional projection $A'$ in $R^{|\mathcal{DK}|+\frac{N-1-3K}{2}}$, where $M'=|\mathcal{DK}|+\frac{N-1-3K}{2}-(N-M)$.  This is because the null space of a uniform  $M$-dimensional subspaces in $R^{n}$ can be represented as $\{\bm{x}: \bm{x}=H\bm{z}, \bm{z} \in R^{N-M}\}$, where $H$ is an $N \times (N-M)$ matrix whose elements are i.i.d. Gaussian random variables $\mathcal{N}(0,1)$. With $H_l$ denoting the $l$-th row of $H$, $H_{i+1}-H_i$ is just a row vector with elements being i.i.d Gaussian random variables $\mathcal{N}(0,2)$. Noting $x_{i+1}-x_i=<H_{i+1}-H_i, \bm{z}>$, we can just think of $x_{i+1}-x_i$ as a $\sqrt{2}$ multiple of an element of a vector in a uniform $M'$-dimensional subspace in $R^{|\mathcal{DK}|+\frac{N-1-3K}{2}}$.

Now our problem reduces to determining for what values of $K$, with high probability the $RelaxedNULL$ condition \eqref{eq:relaxedcond} holds \emph{simultaneously} for \emph{every} gradient support set $\mathcal{K}$ (which determines $\mathcal{DK}$ and $\mathcal{KB}$). This falls exactly into the Grassmann angle framework \cite{Neighborlypolytope, DonohoTanner, XuHassibi} which can compute such $K$ using the Grassmann angle tools from high dimensional convex polytope theory. For details, the reader can refer to \cite{XuHassibi}, with the corresponding parameter $C$  in \cite{XuHassibi} set as $\sqrt{2}$. Figure \ref{fig:Grassmann} plots the recoverable threshold $\frac{K}{M}$ as a function of the compression ration $\frac{M}{N}$ as $N \rightarrow \infty$.

\begin{figure}
\centering
\includegraphics[width=.6\textwidth]{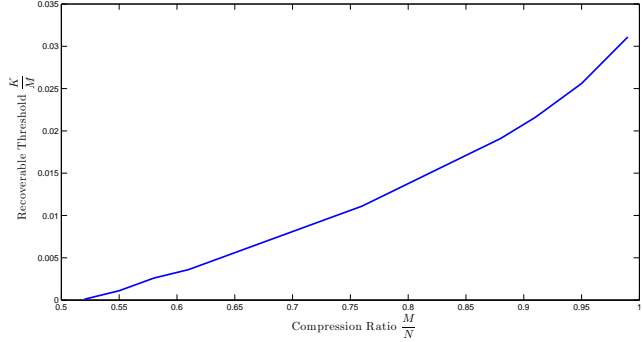}
%\end{psfrags}
\caption{Recoverable thresholds on sparsity of gradient support for TV minimization from the Grassmann angle framework \cite{XuHassibi}}\label{fig:Grassmann}
\end{figure}

\subsection{Gradient Sparsity $K$ Growing Linearly with Signal Dimension $N$}
In this part, we consider the regime of interest where the sparsity of the signal gradient grows linearly with the problem dimension. The main result is summarized in the following theorem, showing that TV minimization can allow the gradient sparsity $K$ to grow proportionally with signal dimension $N$.
\begin{theorem}
\label{thm:mainthmlinearregime}
Suppose that the measurement matrix $A$ is an $M \times N$ matrix having i.i.d. standard zero mean Gaussian elements. For any constant $0<\alpha<1$, there exists a constant $\delta>0$  such that the following statement holds true, with overwhelming probability as $M\rightarrow \infty$, $N\rightarrow \infty$, and $\frac{M}{N} \rightarrow \alpha$.

 For all subsets $\mathcal{K} \subseteq \{1,2,...,N-1\}$ with cardinality $|\mathcal{K}| \leq \delta N$, and for every nonzero vector $\bm{x}$ in the null space of $A$ (namely $A\bm{x}=0$, $\bm{x} \neq \bm{0}$),
\begin{equation}
\|(D\bm{x})_{\mathcal{K}}\|_1< \|(D\bm{x})_{\mathcal{K}^c}\|_1,
\end{equation}
where $\mathcal{K}^c=\{1,2,...,N-1\}\setminus K$.
\end{theorem}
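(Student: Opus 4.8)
The plan is to derive Theorem~\ref{thm:mainthmlinearregime} from the null space characterization of Theorem~\ref{thm:TVnullspacecondition1} by proving that, with overwhelming probability, \eqref{eq:TVnullspacecondition} holds \emph{simultaneously} for every support $\mathcal{K}$ with $|\mathcal{K}|\le\delta N$. First I would invoke the deterministic reduction already prepared in Section~\ref{sec:Grass}: given $\mathcal{K}$, build $\mathcal{DK}$ with $|\mathcal{DK}|\le 2K$ and extract a family $\mathcal{KB}$ of finite differences $|x_{i+1}-x_i|$ using pairwise disjoint indices, all avoiding $\mathcal{DK}$, so that the \emph{RelaxedNULL} inequality \eqref{eq:relaxedcond} implies \eqref{eq:TVnullspacecondition}. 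To make the dimension count uniform in $\mathcal{K}$, I would enlarge $\mathcal{DK}$ to a set $\widetilde{\mathcal{DK}}$ of size exactly $2\lceil\delta N\rceil$ (this only strengthens \eqref{eq:relaxedcond}); one can then always choose $\mathcal{KB}$ with $\lfloor(N-1-4\lceil\delta N\rceil)/2\rfloor$ terms, so that the number of coordinates appearing in the relaxed condition is a fixed value $n'=\lfloor(N-1)/2\rfloor$, independent of $\mathcal{K}$.

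Second, I would perform the Gaussian change of variables of Section~\ref{sec:Grass}: writing $\ker A=\{H\bm{z}:\bm{z}\in\mathbb{R}^{N-M}\}$ with $H$ having i.i.d.\ $\mathcal{N}(0,1)$ entries, the rows $\{H_j\}_{j\in\widetilde{\mathcal{DK}}}$ and the difference rows $\{H_{i+1}-H_i\}_{i\in\mathcal{KB}}$ are jointly independent Gaussian rows because their index supports are disjoint; after rescaling the $\mathcal{KB}$ rows by $1/\sqrt2$, the vector of the $n'$ relevant quantities ranges over the null space of a uniformly random projection $A'\in\mathbb{R}^{M'\times n'}$ with $M'=n'-(N-M)$. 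Hence the event that \eqref{eq:relaxedcond} holds for all admissible $\mathcal{K}$ is contained in the event that
\[
\sqrt2\,\|\bm{x}_{\mathcal{S}}\|_1\le\|\bm{x}_{\mathcal{S}^c}\|_1\qquad\text{for all }\bm{x}\in\ker A'\text{ and all }\mathcal{S}\text{ with }|\mathcal{S}|\le 2\lceil\delta N\rceil,
\]
which is exactly \eqref{eq:chdimcond}; the quantification over $\mathcal{S}$ (equivalently over $\mathcal{K}$) is the \emph{strong}, all-supports version, so the exponentially many choices of $\mathcal{K}$ are absorbed into the strong threshold without an extra union bound.

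Third, I would invoke the Grassmann-angle / neighborly-polytope threshold computation of \cite{XuHassibi} with parameter $C=\sqrt2$: there is a strictly positive strong threshold $\rho_S(\cdot,\sqrt2)$ such that when $|\mathcal{S}|/n'<\rho_S(M'/n',\sqrt2)$ the displayed property fails with probability at most $e^{-\Omega(N)}$. Using the asymptotics $n'/N\to\tfrac12$, $M'/n'\to 2\alpha-1$, and $|\mathcal{S}|/n'\to 4\delta$, it then suffices to pick $\delta>0$ with $4\delta<\rho_S(2\alpha-1,\sqrt2)$, which is possible since $\rho_S(2\alpha-1,\sqrt2)>0$; this yields the asserted $\delta$ and proves the theorem, with the curve of Figure~\ref{fig:Grassmann} being precisely this threshold.

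The step I expect to be the main obstacle is the interplay between the dimension bookkeeping and the admissible range of $\alpha$. The clean reduction above produces reduced aspect ratio $M'/n'\to 2\alpha-1$, which is positive only for $\alpha>\tfrac12$; for $\alpha\le\tfrac12$ the relaxed condition \eqref{eq:relaxedcond} involves fewer than $\dim\ker A$ coordinates and is therefore vacuously violated, so one must instead work with the exact cone of \eqref{eq:TVnullspacecondition} --- for example via a sharper count of salvageable disjoint difference terms, or a direct Gaussian-width estimate of that cone in the linear regime (note the crude escape-through-the-mesh bound of Section~\ref{sec:escapefrommesh} is of order $\sqrt{NK}\ln N$ and hence too weak here). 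A secondary, mechanical difficulty is verifying that the enlargement $\mathcal{DK}\to\widetilde{\mathcal{DK}}$ and the greedy choice of $\mathcal{KB}$ keep $n'$ and $|\mathcal{S}|$ at the stated values up to $o(N)$ corrections, and citing the precise strong robust null-space threshold of \cite{XuHassibi} for $C=\sqrt2$ together with its positivity.
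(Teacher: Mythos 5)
Your route is not the one the paper uses for Theorem \ref{thm:mainthmlinearregime}, and as written it has a genuine gap that you yourself flag but do not close: the reduction through the \emph{RelaxedNULL} condition \eqref{eq:relaxedcond} and the change of variables to a reduced random projection $A'$ only makes sense when the reduced number of measurements $M'=n'-(N-M)$ is positive. With your bookkeeping $n'\approx N/2$, so $M'/n'\to 2\alpha-1$, and for every $\alpha\le \tfrac12$ the reduced problem is vacuous (the image of the relevant coordinates is all of $\mathbb{R}^{n'}$, so \eqref{eq:relaxedcond} necessarily fails for some null space vector). Since the theorem asserts the existence of $\delta>0$ for \emph{every} constant $0<\alpha<1$, your argument proves at best the half of the statement with $\alpha>\tfrac12$; the ``sharper count of disjoint differences'' or ``direct Gaussian-width estimate'' you mention for $\alpha\le\tfrac12$ is exactly the missing content, not a secondary detail. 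A second, unresolved point is your claim that the exponentially many supports $\mathcal{K}$ are ``absorbed into the strong threshold without an extra union bound'': the reduced projection $A'$ is built from the rows $H_j$, $j\in\mathcal{DK}$, and the differences $H_{i+1}-H_i$, $i\in\mathcal{KB}$, so it \emph{changes with} $\mathcal{K}$; the strong threshold of \cite{XuHassibi} quantifies over all supports inside one fixed reduced subspace, and does not by itself cover the exponentially many distinct reductions. You would need either a union bound over the choices of $(\mathcal{DK},\mathcal{KB})$ with a failure exponent beating that entropy, or a different setup in which the transformation is independent of $\mathcal{K}$.

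For contrast, the paper proves the theorem by a route that is insensitive to the value of $\alpha$: it first shows (Theorem \ref{thm:LDLowerBound}) that uniformly over all $\bm{x}=H\bm{z}$ with $\|\bm{z}\|_2=1$ one has $\|D\bm{x}\|_1\ge \gamma N$, via an upper bound $\|D\bm{x}\|_1\le 2C_1N$ from the largest singular value, an $\epsilon$-net over the unit sphere of $\mathbb{R}^{N-M}$, and a small-deviation estimate (Theorem \ref{thm:deviationforsinglepoint}) showing $P\bigl(\sum_i|x_{i+1}-x_i|\le\gamma N\bigr)\le e^{-(1-\mu)N(\log(1/\gamma)+C_2+o(1))}$ for i.i.d.\ Gaussians, which beats the net cardinality once $\gamma$ is small; it then shows (Theorem \ref{thm:partialTVnorm}) that $\|(D\bm{x})_{\mathcal{K}}\|_1<\tfrac12\gamma N$ uniformly over all $|\mathcal{K}|\le\delta N$, using $\|(D\bm{x})_{\mathcal{K}}\|_1\le 2\sum_{j\in\mathcal{M}}|x_j|$ with $|\mathcal{M}|\le 2|\mathcal{K}|$. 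Combining the two bounds gives $\|(D\bm{x})_{\mathcal{K}}\|_1<\|(D\bm{x})_{\mathcal{K}^c}\|_1$ for all $\mathcal{K}$ and all nonzero null space vectors, for any $\alpha\in(0,1)$. If you want to salvage your Grassmann-angle approach, you should either restrict its claim to $\alpha>\tfrac12$ and address the union over reductions explicitly, or adopt the paper's lower-bound/upper-bound decomposition for the general case.
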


To prove Theorem \ref{thm:mainthmlinearregime}, we first prove a uniform lower bound for the TV norm in Subsection \ref{Sec:MainNetSection}, and then utilize the lower bound to arrive at the conclusion in Subsection \ref{sec:KBound}.

\subsubsection{Uniform Lower Bound for Total Variation Norm}\label{Sec:MainNetSection}
 We consider the $(N-M)$-dimensional null space of the measurement matrix $A$. Recall that $A$ has i.i.d. standard zero mean Gaussian elements. Equivalently, a basis for the null space of $A$ can be represented by an $N \times (N-M)$ matrix $H$ with i.i.d. standard zero mean Gaussian elements. To prove the null space property for successful signal recovery using TV minimization, we only need to prove the null space property holds for those vectors $Hz$, where $z \in \mathbb{R}^{N-M}$ with $\|z\|_2=1$.

To this end, we first establish  the following claim.
\begin{theorem}
\label{thm:LDLowerBound}
With high probability as $N \rightarrow \infty$, uniformly for every $\x=Hz$ with $z \in \mathbb{R}^{N-M}$ and $\|z\|_2=1$, $$\|(D\bm{x})\|_1 \geq \gamma N,$$
where $\gamma>0$ is a sufficiently small positive constant.
\end{theorem}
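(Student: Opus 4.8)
The plan is to prove the uniform lower bound $\|D\x\|_1 \geq \gamma N$ for all $\x = Hz$ with $\|z\|_2 = 1$ by an $\varepsilon$-net argument over the unit sphere $\mathbb{S}^{N-M-1}$ in $z$-space, combined with a concentration bound for a single fixed $z$ and a Lipschitz (stability) estimate that controls how much $\|D\x\|_1$ can change as $z$ moves within the net. First I would fix a single unit vector $z$: then $\x = Hz$ has i.i.d.\ $\mathcal{N}(0,1)$ coordinates, so $D\x$ has coordinates $[D\x]_i = x_{i+1} - x_i \sim \mathcal{N}(0,2)$, and $\|D\x\|_1 = \sum_{i=1}^{N-1}|x_{i+1}-x_i|$ is a sum of $N-1$ identically distributed (though pairwise-dependent) terms, each with mean $\sqrt{2}\sqrt{2/\pi} = 2/\sqrt{\pi}$. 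Hence $E\|D\x\|_1 = (N-1)\cdot\frac{2}{\sqrt{\pi}}$, which is $\Theta(N)$. The function $z \mapsto \|DHz\|_1$ is Lipschitz in $z$ with constant $\|DH\|_{2}$ (operator norm), so by Gaussian concentration of Lipschitz functions, $\|D\x\|_1$ concentrates around its mean with a tail $\exp(-c t^2 / \|DH\|_2^2)$; since $\|DH\|_2 = O(\sqrt{N})$ with high probability (it is, up to the constant factor from $D$, the operator norm of an $(N-1)\times(N-M)$ Gaussian matrix, which is $O(\sqrt{N})$), we get that for fixed $z$, $\|D\x\|_1 \geq \frac{1}{\sqrt{\pi}}N$, say, except with probability $\exp(-c' N)$.

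Next I would upgrade this pointwise bound to a uniform bound over all $z$ on the sphere. Take a $\delta$-net $\mathcal{N}_\delta$ of $\mathbb{S}^{N-M-1}$; its cardinality is at most $(3/\delta)^{N-M} \leq (3/\delta)^N$. A union bound over $\mathcal{N}_\delta$ shows that $\|DHz\|_1 \geq \frac{1}{\sqrt\pi}N$ simultaneously for all $z\in\mathcal{N}_\delta$, provided the concentration exponent $c'N$ beats $N\ln(3/\delta)$ — this is where $\gamma$ (and the net fineness $\delta$) must be chosen as a sufficiently small constant, exactly as the theorem statement allows. Then for an arbitrary unit $z$, pick $z' \in \mathcal{N}_\delta$ with $\|z - z'\|_2 \leq \delta$, and estimate
\begin{equation*}
\big| \|DHz\|_1 - \|DHz'\|_1 \big| \leq \|DH(z-z')\|_1 \leq \sqrt{N-1}\,\|DH(z-z')\|_2 \leq \sqrt{N-1}\,\|DH\|_2\,\delta = O(N\delta).
\end{equation*}
Choosing $\delta$ small enough that $O(N\delta) \leq \frac{1}{2\sqrt\pi}N$ yields $\|DHz\|_1 \geq \frac{1}{2\sqrt\pi}N =: \gamma N$ uniformly, on the intersection of the (high-probability) events that the net bound holds and that $\|DH\|_2 = O(\sqrt N)$.

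The main obstacle I anticipate is the concentration step for a single fixed $z$: while $\|D\x\|_1$ is manifestly Lipschitz in the Gaussian vector $\x$ (with constant $\|D\|_2 \le 2$, giving a clean tail in terms of $\x$), one must be careful that the net lives in $z$-space, not $\x$-space, so the relevant Lipschitz constant is $\|DH\|_2$, a random quantity, and one must first condition on the high-probability event $\{\|DH\|_2 \le C\sqrt N\}$ before applying concentration and the union bound. An alternative, perhaps cleaner route that avoids the crude $\ell_1$–$\ell_2$ conversion loss (which forces $\delta$ to be exponentially small and just barely survives the union bound) is to use the matrix $D$'s structure directly: since $\|D\x\|_1$ is a norm of a Gaussian vector, one can invoke Gaussian anti-concentration / small-ball estimates for $\|DHz\|_1$ coordinatewise, showing that a constant fraction of the coordinates $[DHz]_i$ exceed a constant with high probability uniformly in $z$ — but the net argument above is the most robust and is what I would carry out. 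Finally, the constant $\gamma$ obtained will be modest (on the order of $\frac{1}{2\sqrt\pi}$ before the net correction), consistent with the ``sufficiently small positive constant'' in the statement.
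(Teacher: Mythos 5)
Your skeleton (net in $z$-space, per-point probability bound, union bound, then a Lipschitz/chaining step to pass from net points to arbitrary $z$) is the same as the paper's, but the per-point estimate you plug in is too weak, and this is a genuine gap rather than a technicality. Gaussian concentration of the Lipschitz function $\bm{x}\mapsto\|D\bm{x}\|_1$ (Lipschitz constant $2\sqrt{N-1}$ in $\bm{x}$, mean about $\frac{2}{\sqrt{\pi}}N$) gives, for a single fixed $z$, a failure probability $e^{-c'N}$ where $c'$ is a \emph{fixed} universal constant — at best $c'\le\frac{1}{2\pi}$ no matter how low you set the target level. On the other side of the union bound, the net lives on a sphere of dimension $N-M\approx(1-\alpha)N$, and your approximation step forces the fineness $\delta$ to be at most a small constant of order $\gamma$ (you need $\sqrt{N-1}\,\|DH\|_2\,\delta\approx 4N\delta$ to be below the gap between the net-point level and $\gamma N$), so the net has cardinality about $e^{(1-\alpha)N\ln(3/\delta)}$ with $\ln(3/\delta)$ growing like $\log(1/\gamma)$. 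Your condition ``$c'N$ beats $N(1-\alpha)\ln(3/\delta)$'' therefore fails unless $1-\alpha$ is below a small absolute constant (numerically, $\alpha$ larger than roughly $0.97$); and lowering $\gamma$ makes things worse, since it shrinks $\delta$ and inflates the net while your concentration exponent saturates. Since Theorem~\ref{thm:LDLowerBound} must hold for every constant $\alpha\in(0,1)$ (it feeds into Theorem~\ref{thm:mainthmlinearregime}), the proposal does not establish the statement.

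What is needed — and what the paper proves — is a small-ball (lower-deviation) estimate whose exponent grows as $\gamma\to 0$: Theorem~\ref{thm:deviationforsinglepoint} gives $P(\|D\bm{x}\|_1\le\gamma N)\le e^{-(1-\mu)N(\log(1/\gamma)+C_2+o(1))}$, obtained by observing that $\|D\bm{x}\|_1\le\gamma N$ forces at least $(1-\frac{1}{T})N$ increments to satisfy $|x_{i+1}-x_i|\le T\gamma$, each such event having conditional probability at most $\frac{T\gamma}{\sqrt{2\pi}}$, and then counting the possible supports. Because this exponent carries $\log(1/\gamma)$ with coefficient $1$, it dominates the net-entropy term $(1-\alpha)\log\bigl(1+\frac{2}{C\gamma}\bigr)$ once $\gamma$ is small, for every $\alpha\in(0,1)$; the paper then completes the uniform bound by the same chaining you describe, using the uniform upper bound $\|D\bm{x}\|_1\le 2C_1N$. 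Your closing aside about anti-concentration/small-ball estimates is exactly the right instinct; note also that the $\ell_1$--$\ell_2$ conversion is not the source of the difficulty (the $\sqrt{N}$ factor is of the correct order, since $\|D\bm{x}\|_1=\Theta(N)$ while $\|D\bm{x}\|_2=\Theta(\sqrt{N})$ on the relevant vectors) — the problem is the substitution of mean-concentration for a small-ball bound.
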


We divide the proof into three parts. In the first part (Subsection \ref{sec:TVupperbound}), we establish an upper bound uniformly true for every $\x=Hz$ with $z \in \mathbb{R}^{N-M}$ and $\|z\|_2=1$. In the second part (Subsection \ref{sec:netargument}), assuming that a certain deviation bound holds true for the Total Variation norm (to be proven in Subsection \ref{sec:deviationbound}), we establish Theorem \ref{thm:LDLowerBound} using the technique of $\epsilon$-net. In the third part (Subsection \ref{sec:deviationbound}), we prove the needed deviation bound for Total Variation norm.

\paragraph{Upper Bound for the Total Variation}
\label{sec:TVupperbound}
First of all, with high probability, as $N \rightarrow \infty$, for every $\x=Hz$ with $z \in \mathbb{R}^{N-M}$ and $\|z\|_2=1$,
$$\|\x\|_2 \leq C_1 \sqrt{N},$$
where $C_1$ is a constant as $N \rightarrow \infty$. We have used the deviation bound for the largest singular value of matrices with i.i.d. Gaussian elements \cite{Geman80}.

Following this fact, we know
$$\|(D\bm{x})\|_1 \leq 2\|\x\|_1 \leq 2\sqrt{N} \|\x\|_2 \leq 2C_1 N.$$

\paragraph{ Uniform Lower Bound on Total Variation through the $\epsilon$-Net }
\label{sec:netargument}
We cover the sphere $\{z|~~\|z\|_2=1\}$ with $\epsilon$-net, where $\epsilon=C\gamma$, $C>0$ and $\gamma>0$ are constants we will choose later. $\epsilon$-net is  a finite set $V=\{v_1, ..., v_{L}\}$ on $\{z|~\|z\|_2=1\}$ such that every point $z$ from $\{z|~\|z\|_2=1\}$, there is a $v_l \in V$ such that $\|z-v_l\|_2 \leq \epsilon$. The size of the $\epsilon$-net can be taken no bigger than $(1+\frac{2}{\epsilon})^{N-M}$.

From Subsection \ref{sec:deviationbound}, we know that for every $\x=Hz$ generated by points from $\epsilon$-net,
$$\|(D\bm{x})\|_1 \geq \gamma N,$$
where $\gamma>0$ is a sufficiently small constant.

For any $z$ such that $\|z\|_2=1$, there exists a point $v_0$ (we change the subscript numbering for $V$ to index the order) in $V$ such
that $\|z-v_0\|_2\triangleq \epsilon_1 \leq \epsilon$. Let $z_1$ denote $z-v_0$,
 then $\|z_1-\epsilon_1v_1\|_2 \triangleq \epsilon_2 \leq \epsilon_1 \epsilon \leq \epsilon^2$ for
 some $v_1$ in $V$. Repeating this process, we have $z=\sum_{j\geq 0} \epsilon_j v_j$, where $\epsilon_0=1$, $\epsilon_j \leq \epsilon^j$ and $v_j \in V$.

%Thus for any $z \in R^m$, $z=\|z\|_2\sum_{j\geq 0} \epsilon_j v_j$. For any index set $K$ with $|K| \leq \beta n$,
%\begin{eqnarray*}
%\sum \limits_{i \in K} |(Hz)_i| &=& \|z\|_2 \sum \limits_{i \in K} |(\sum \limits_{j \geq 0} \epsilon_j Hv_j)_i| \\
%%& \leq & \|z\|_2 \sum \limits_{i \in T} \sum \limits_{j \geq 0} \gamma_j |(Hv_j)_i| \\
%& \leq & \|z\|_2 \sum \limits_{i \in K} \sum \limits_{j \geq 0} \gamma^{j} |(Hv_j)_i| \\
%&= & \|z\|_2  \sum \limits_{j \geq 0} \gamma^{j} \sum \limits_{i \in K} |(Hv_j)_i| \\
%%& \leq &\|z\|_2  \sum \limits_{j \geq 0} \gamma^j  (\frac{1}{2}-\delta)S\\
%& \leq & S\|z\|_2 \frac{(1-\delta)(1+\epsilon)}{(2-\delta){(1-\gamma)}}
%\end{eqnarray*}
Then
\begin{eqnarray*}
\sum \limits_{i} |(D(Hz))_i| &=& \sum \limits_{i } |D(\sum \limits_{j \geq 0} \epsilon_{j} Hv_j)_i| \\
& \geq & \sum \limits_{i}(|D(Hv_0)_i|- \sum \limits_{j \geq 1} \epsilon_{j} |D(Hv_j)_i|) \\
& \geq & \sum \limits_{i} |D(Hv_0)_i|- \sum \limits_{j \geq 1} \epsilon^{j} \sum \limits_{i}|D(Hv_j)_i| \\
& \geq &  \gamma N - \frac{\epsilon}{1-\epsilon} \times 2 C_1 N,
\end{eqnarray*}
where the first inequality follows from the triangle inequality, and the last inequality follows from the upper bound on the TV norm in Subsection \ref{sec:TVupperbound}.

We have just shown that,  for for every $\x=Hz$ with $z \in \mathbb{R}^{N-M}$ and $\|z\|_2=1$,
$$\|(D\bm{x})\|_1 \geq \gamma N-\frac{\epsilon}{1-\epsilon}\times 2 C_1 N.$$
For any arbitrary positive constant $\beta>0$, we can always take $C>0$ to be a sufficiently small constant (this does not affect the proof and conclusion in Subsection \ref{sec:deviationbound}), such that
$$\|(D\bm{x})\|_1 \geq (1-\beta)\gamma N$$
holds true for every every $\x=Hz$ with $z \in \mathbb{R}^{N-M}$ and $\|z\|_2=1$.

\paragraph{Proving the deviation bound}
\label{sec:deviationbound}

In this subsection, we prove that, for a constant $C>0$, a sufficiently small constant  $\gamma>0$, and $\epsilon=C\gamma$,
for every $\x=Hz$ generated by points from $\epsilon$-net,
$$\|(D\bm{x})\|_1 \geq \gamma N,$$
with overwhelming probability as $N \rightarrow \infty$.

We claim,  it is sufficient to prove, for a vector $\x$ with i.i.d. zero mean standard Gaussian random variables, as $N \rightarrow \infty$, with probability $e^{-N (\log(\frac{1}{\gamma})+C_2)}$ , where $C_2>0$ is a constant,
$$\|(D\bm{x})\|_1 \leq \gamma N.$$

In fact, we recall that the size of the $\epsilon$-net is at most $(1+\frac{2}{C\gamma})^{N-M}$, and notice that, for any point $z \in \mathbb{R}^{N-M}$ and $\|z\|_2=1$, the elements of $Hz$ are i.i.d. standard zero mean Gaussian random variables. By a simple union bound, with probability at most
$$P=e^{(N-M) \log(1+\frac{2}{C\gamma})}\times e^{-N (\log(\frac{1}{\gamma})+C_2)},$$
there exist some point $\x=Hz$ with $z$ from the $\epsilon$-net, such that

$$\|(D\bm{x})\|_1 \leq \gamma N.$$

No matter what $C$ we are looking at, if we take $\gamma>0$ sufficiently small, this probability $P$ converges to $0$,  as $N\ \rightarrow \infty$, $M \rightarrow \infty$, and $\frac{M}{N} \rightarrow \alpha$, where $\alpha$ is a constant. This means that with overwhelming probability, for all points from the $\epsilon$-net,
$$\|(D\bm{x})\|_1 \geq \gamma N.$$
This leads to the conclusion in Subsection \ref{sec:netargument} that, with overwhelming probability, for every $\x=Hz$ with $z \in \mathbb{R}^{N-M}$ and $\|z\|_2=1$,
$$\|(D\bm{x})\|_1 \geq (1-\beta)\gamma N.$$

Now we focus on proving the following theorem about a sequence of i.i.d. zero mean Gaussian random variables of unit variance.

\begin{theorem}
\label{thm:deviationforsinglepoint}
Suppose $x_1$, $x_2$, ..., and $x_{N}$ are $N$ independent random variables following the standard Gaussian distribution $\mathcal{N}(0,1)$. Then for all sufficiently small $\gamma>0$, the probability
$$P(\sum\limits_{i=1}^{N-1} |x_{i+1}-x_i| \leq \gamma N) \leq  e^{- (1-\mu)N (\log(\frac{1}{\gamma})+C_2+o(1))},$$
where both $\mu>0$ and $C_2$ are constants, independent of $\gamma$ and $N$; $o(1)$ goes to zero as $N \rightarrow \infty$; moreover, $\mu>0$ can be made arbitrarily small.

\end{theorem}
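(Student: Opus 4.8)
The plan is to reduce the problem to a block structure that makes the events independent, and then estimate the probability of the rare event on each block by an explicit Gaussian volume computation. First I would partition the index set $\{1,\ldots,N\}$ into $\lfloor N/2\rfloor$ disjoint consecutive pairs $\{2j-1,2j\}$. The total variation $\sum_{i=1}^{N-1}|x_{i+1}-x_i|$ dominates $\sum_{j}|x_{2j}-x_{2j-1}|$, since the latter is a subsum of nonnegative terms coming from every other gap. Hence if the total variation is at most $\gamma N$, then $\sum_{j=1}^{\lfloor N/2\rfloor}|x_{2j}-x_{2j-1}|\le \gamma N$. Now the random variables $W_j := x_{2j}-x_{2j-1}$ are i.i.d.\ $\mathcal{N}(0,2)$, and they are mutually independent because the pairs are disjoint. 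So it suffices to bound $P\bigl(\sum_{j=1}^{m}|W_j|\le \gamma N\bigr)$ with $m=\lfloor N/2\rfloor$ and the $W_j$ i.i.d.\ centered Gaussians of variance $2$; writing $\gamma N = 2\gamma m (1+o(1))$, this is the probability that an average of $m$ i.i.d.\ half-normal variables is at most a small constant $\approx 2\gamma$.

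The next step is a large-deviations (Cram\'er/Chernoff) bound for this lower tail. For $\lambda>0$,
\begin{equation*}
P\Bigl(\sum_{j=1}^m |W_j|\le t\Bigr) \le e^{\lambda t}\,\bigl(E\,e^{-\lambda |W_1|}\bigr)^m,
\end{equation*}
and $E\,e^{-\lambda|W_1|}$ for $W_1\sim\mathcal N(0,2)$ can be written in closed form via the complementary error function; as $\lambda\to\infty$ it decays like $c/\lambda$. Optimizing (roughly $\lambda \sim m/t \sim 1/(2\gamma)$) yields a bound of the shape $\exp\bigl(-m\log(1/\gamma) - mC_2 + o(m)\bigr)$ for a universal constant $C_2$, which is exactly $\exp\bigl(-(1/2)N(\log(1/\gamma)+C_2+o(1))\bigr)$. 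A cleaner alternative that avoids the error-function bookkeeping is to bound $P(\sum_j|W_j|\le t)$ directly as the Gaussian measure of the $\ell_1$-ball of radius $t$ in $\mathbb R^m$: this measure is at most the Lebesgue volume of that ball times the maximum density $(4\pi)^{-m/2}$, i.e.\ at most $(4\pi)^{-m/2}\cdot (2t)^m/m!$, and Stirling then gives $\exp\bigl(-m\log(1/\gamma)-\Theta(m)\bigr)$. Either route produces the stated estimate with a factor of $\tfrac12$ in the exponent.

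To get the general constant $1-\mu$ with $\mu$ arbitrarily small, instead of pairing adjacent indices I would group the indices into disjoint blocks of length $b$ (for $b$ a large constant), say $B_k=\{(k-1)b+1,\ldots,kb\}$, and note that the within-block variation $\sum_{i\in B_k, i\ne \max B_k}|x_{i+1}-x_i|$ for different $k$ are independent (they involve disjoint coordinates). On each block this within-block TV is the $\ell_1$ norm of $Dx$ restricted to $b-1$ consecutive differences of $b$ independent standard Gaussians; the same Gaussian-volume argument bounds the probability it is $\le \gamma b$ by $\exp\bigl(-(b-1)\log(1/\gamma)-\Theta(b)\bigr)$, and multiplying over the $N/b$ blocks gives exponent $-\tfrac{b-1}{b}N(\log(1/\gamma)+C_2+o(1))$; taking $b$ large makes $\mu = 1/b$ as small as desired. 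The main obstacle is purely the careful bookkeeping in the single-block estimate — showing that the Gaussian measure of a small $\ell_1$-ball in $\mathbb R^b$ really does contribute the clean $\log(1/\gamma)$ rate with an error term that is uniform in $b$ and vanishing in $N$; once that lemma is in hand, independence across blocks and the union-bound step in Subsection~\ref{sec:deviationbound} finish the argument.
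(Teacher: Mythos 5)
Your pairing step is fine: the pair differences $x_{2j}-x_{2j-1}$ form a subsum of the total variation, are i.i.d.\ $\mathcal{N}(0,2)$, and either the Chernoff bound or the volume-times-maximum-density bound correctly gives $P(\sum_j|W_j|\le\gamma N)\le e^{-\frac{N}{2}(\log(1/\gamma)+O(1))}$. The gap is in the block upgrade, which is the only part that would deliver the theorem's exponent $(1-\mu)N$ with $\mu$ arbitrarily small. There you estimate $P(V_k\le\gamma b)$ for each block's internal variation $V_k$ and then ``multiply over the $N/b$ blocks''; but that product bounds $P(\forall k:\ V_k\le\gamma b)$, and the event $\sum_{i=1}^{N-1}|x_{i+1}-x_i|\le\gamma N$ does \emph{not} imply that every block's variation is at most $\gamma b$ --- the budget $\gamma N$ can be spent very unevenly, with a few blocks carrying almost all of it. So the event you need to control, $\{\sum_k V_k\le\gamma N\}$, is not contained in the event whose probability you computed, and the final multiplication step fails as written. (Contrary to your closing remark, the single-block small-ball estimate is the easy part; the aggregation across blocks is where the real work lies.)

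The fix is available with tools you already used. Since the $V_k$ are independent and $\sum_k V_k$ is dominated by the total variation, apply the same Laplace-transform bound to the sum: $P(\sum_k V_k\le\gamma N)\le e^{\lambda\gamma N}\,\bigl(E\,e^{-\lambda V_1}\bigr)^{N/b}$, combined with your Gaussian-volume small-ball estimate $P(V_1\le\epsilon)\le C_b\,\epsilon^{\,b-1}$, which gives $E\,e^{-\lambda V_1}\le C_b'\,\lambda^{-(b-1)}$ for large $\lambda$; optimizing $\lambda\sim 1/\gamma$ yields exponent $-\frac{b-1}{b}N\bigl(\log(1/\gamma)+C_2+o(1)\bigr)$ with $C_2$ depending only on $b$, i.e.\ $\mu=1/b$ as you intended. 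For comparison, the paper takes a genuinely different route with no blocking: it writes the probability as a product of conditional probabilities of the successive increments, notes by pigeonhole that on the event in question at least $(1-\frac1T)N$ increments are $\le T\gamma$, bounds each corresponding conditional probability by $\frac{T\gamma}{\sqrt{2\pi}}$ (the worst case being $x_{i}$ centered at the previous value), and finishes with a binomial union bound over the locations of the small increments plus Stirling, obtaining $\mu=1/T$. The same pigeonhole device applied at the block level (at least a $(1-\frac1T)$ fraction of blocks have $V_k\le T\gamma b$, then a union bound over block subsets) would also rescue your version; either repair makes your independence-based argument a valid alternative to the paper's sequential-conditioning proof.
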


\begin{proof}
Let us define $t_i=|x_{i+1}-x_i|$, $1\leq i \leq N-1$. Suppose that
$$\|(D\bm{x})\|_1 \leq \gamma N,$$
then among $N$ terms in $\|(D\bm{x})\|_1$, there must be at most $\frac{N}{T} $ terms which are larger than $T\gamma$ in magnitude, where $T$ is any arbitrary positive number (say, $T=1000$); namely, there must be at least $(1-\frac{1}{T})N$ terms that are no bigger than $T\gamma$.

Let $\mathcal{M}_i$ denote the event that $t_i\geq T\gamma$, and let $\mathcal{M}_i^c$ denote the complementary event that $t_i\leq T\gamma$. We further define the indicator function $I_i$, $1\leq i \leq N$, as
\begin{equation*}
I_i=
\begin{cases}
0 &\mbox{if~} M_i\mbox{~happens,}\cr
1 & \mbox{if~} M_i^c \mbox{~happens}.
\end{cases}
\end{equation*}

And let $S$ be the set of $i$'s such that $t_i \leq T\gamma$; and $S^c=\{1,2,...,N-1\}\setminus S$ be the complement of $S$. Then the probability that $M_i$ happens for and only for $i \in S$ is
$$P(I_1, I_2,...,I_{N-1})=\prod\limits_{i=1}^{N-1} P(I_i|I_1, I_2,...,I_{i-1}).  $$
When $I_i=0$, we simply upper bound  $P(I_i|I_1, I_2,...,I_{i-1})$ by $1$; when $I_i=1$, we claim that $P(I_i|I_1, I_2,...,I_{i-1})$ is upper bounded by $\frac{1}{\sqrt{2\pi}}T\gamma$. In fact, in the Gaussian process $x_{i+1}-x_i$, no matter what values $x_1$,$x_2$, ...$x_{i-1}$ take, the probability of having a magnitude $|x_{i}-x_{i-1}|$ no larger than than $T\gamma$ is maximized when $x_{i-1}$ is equal to $0$. This leads to
$$P(I_1, I_2,...,I_{N-1})\leq (\frac{1}{\sqrt{2\pi}}T\gamma)^{|S|}.  $$
Since $|S|\geq (1-\frac{1}{T})N$, we have
$$P(I_1, I_2,...,I_{N-1})\leq (\frac{1}{\sqrt{2\pi}}T\gamma)^{(1-\frac{1}{T})N}.  $$
We have at most $\binom{N-1}{|S|}$ possibility for the set $S$ with cardinality $|S|$. So the probability
$$P_1 \triangleq P(\sum\limits_{i=1}^{N-1} |x_{i+1}-x_i| \leq \gamma N) \leq \sum\limits_{j=(1-\frac{1}{T})N}^{N-1} \binom{N-1}{j} (\frac{1}{\sqrt{2\pi}}T\gamma)^{j}.$$

From Stirling's formula and notice the $j=(1-\frac{1}{T})N$ is the biggest term in the upper bound of $P_1$, when $\gamma$ is sufficient small such that $(\frac{1}{\sqrt{2\pi}}T\gamma)$ is smaller than $1-\frac{1}{T}$,  we have
$$\log(P_1)/N \rightarrow \left(H(\frac{1}{T}) +(1-\frac{1}{T}) \log(\frac{1}{\sqrt{2\pi}} T\gamma) \right)+o(1),$$
where $H(p)=p \log(\frac{1}{p})+(1-p) \log(\frac{1}{1-p})$ is the entropy function, and $o(1)$ is a term that goes to $0$ as $N \rightarrow \infty$.

Since we can pick an arbitrarily big constant $T$, we have the theorem statement by simply taking $\mu=\frac{1}{T}$.
\end{proof}

We remark that Theorem \ref{thm:deviationforsinglepoint} eventually leads to the proof of Theorem \ref{thm:LDLowerBound}.

\subsubsection{Upper Bound on the Partial Total Variation Norm}
\label{sec:KBound}
In this section, we prove the following theorem.

\begin{theorem}
\label{thm:partialTVnorm}
Suppose a matrix $H$ is an $N \times (N-M)$ matrix having i.i.d. standard zero mean Gaussian elements. For any constant $0<\alpha<1$ and any positive constant $\gamma>0$, there exists a constant $\delta>0$  such that the following statement holds true, with overwhelming probability as $M\rightarrow \infty$, $N\rightarrow \infty$, and $\frac{M}{N} \rightarrow \alpha$.

For all subsets $\mathcal{K} \subseteq \{1,2,...,N-1\}$ with cardinality $|\mathcal{K}| \leq \delta N$, and for every $\x=Hz$ with $z \in \mathbb{R}^{N-M}$ and $\|z\|_2=1$,
\begin{equation}
\|(D\bm{x})_{\mathcal{K}}\|_1< \frac{1}{2} \gamma N.
\end{equation}
\end{theorem}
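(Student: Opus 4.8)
### Proof Proposal for Theorem \ref{thm:partialTVnorm}

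The goal is to bound $\|(D\bm{x})_{\mathcal{K}}\|_1$ from above by $\tfrac12\gamma N$ for all small $\mathcal{K}$ and all unit-norm directions $z$. The plan is to reduce to a finite union bound over supports $\mathcal{K}$ and an $\epsilon$-net over $z$, exactly mirroring the structure used in Subsection \ref{sec:netargument} and Subsection \ref{sec:deviationbound}, but now controlling a \emph{partial} sum of $|x_{i+1}-x_i|$ rather than the full TV norm. First I would fix a support $\mathcal{K}$ with $|\mathcal{K}|\leq\delta N$ and a single point $z$ on the unit sphere; then $\bm{x}=Hz$ has i.i.d.\ standard Gaussian entries, so $(D\bm{x})_{\mathcal{K}}$ is a sum of $|\mathcal{K}|$ terms of the form $|x_{i+1}-x_i|$, each marginally the absolute value of a $\mathcal{N}(0,2)$ variable. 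The key probabilistic estimate is a large-deviation \emph{upper} tail bound: $P\big(\sum_{i\in\mathcal{K}}|x_{i+1}-x_i| \geq \tfrac12\gamma N\big)$ should decay like $e^{-cN}$ for a constant $c$ that we can make as large as we wish by shrinking $\delta$ (for $\gamma$ fixed). Since the sum has at most $\delta N$ i.i.d.-like summands (the overlapping pairs introduce only mild dependence, handled by splitting into even- and odd-indexed subsequences, which are each genuinely independent), a Chernoff/MGF argument gives $P(\cdot) \leq \big(E e^{\lambda|x_{i+1}-x_i|}\big)^{\delta N}e^{-\lambda\gamma N/2}$, and optimizing in $\lambda$ yields exponential decay with rate growing like $\gamma\log(1/\delta)$ as $\delta\to 0$.

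Next I would take the union bound over the $\binom{N-1}{|\mathcal{K}|}\leq \binom{N-1}{\delta N}$ choices of $\mathcal{K}$, which contributes a factor $e^{H(\delta)N(1+o(1))}$ with $H$ the binary entropy, and over an $\epsilon$-net $V$ on the unit sphere in $\mathbb{R}^{N-M}$ of size at most $(1+2/\epsilon)^{N-M}$, contributing $e^{(N-M)\log(1+2/\epsilon)}$. Choosing $\epsilon$ a small multiple of $\gamma$ and $\delta$ small enough that the large-deviation rate $c(\delta,\gamma)$ dominates $H(\delta)+\alpha\log(1+2/\epsilon)$ (possible because $c\to\infty$ as $\delta\to 0$ while $H(\delta)\to 0$ and the net term stays bounded), the total failure probability still tends to $0$. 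This establishes the bound on the net with some slack, say $\|(D\bm{x})_{\mathcal{K}}\|_1 \leq \tfrac14\gamma N$ for all $v\in V$ and all $\mathcal{K}$.

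Finally I would transfer the bound from the net to the whole sphere using the chaining/telescoping trick already introduced in Subsection \ref{sec:netargument}: write $z=\sum_{j\geq 0}\epsilon_j v_j$ with $\epsilon_0=1$, $\epsilon_j\leq\epsilon^j$, $v_j\in V$, and use the triangle inequality together with the global upper bound $\|D(Hv_j)\|_1\leq 2C_1 N$ from Subsection \ref{sec:TVupperbound} (which automatically bounds the partial sum $\|(D(Hv_j))_{\mathcal{K}}\|_1\leq 2C_1 N$ too). This gives $\|(D\bm{x})_{\mathcal{K}}\|_1 \leq \tfrac14\gamma N + \tfrac{\epsilon}{1-\epsilon}\cdot 2C_1 N$, and taking $\epsilon$ (hence $C$) sufficiently small makes the second term smaller than $\tfrac14\gamma N$, yielding the claimed $\tfrac12\gamma N$ bound uniformly. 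The main obstacle is the large-deviation upper tail estimate with a rate that genuinely blows up as $\delta\to 0$ while $\gamma$ stays fixed — this is what lets the entropy factor $e^{H(\delta)N}$ from the choice of $\mathcal{K}$ be absorbed; getting the dependence on $\delta$ versus $\gamma$ right in the Chernoff exponent, and cleanly handling the overlapping-pairs dependence, is the delicate part, though it is routine once one splits into independent even/odd subchains.
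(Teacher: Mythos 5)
Your proposal is essentially correct, but it takes a genuinely different and much heavier route than the paper. The paper's own proof is a two-line reduction: it observes that $\|(D\bm{x})_{\mathcal{K}}\|_1 \leq 2\sum_{j\in\mathcal{M}}|x_j|$, where $\mathcal{M}$ (of cardinality at most $2|\mathcal{K}| \leq 2\delta N$) is the set of coordinates of $\bm{x}$ appearing in the selected differences, and then invokes the methodology of \cite{L_pMinimization} together with the bounds already established (in particular the uniform $\ell_2$ bound $\|\bm{x}\|_2\leq C_1\sqrt{N}$ of Subsection \ref{sec:TVupperbound} and Theorem \ref{thm:LDLowerBound}); in essence, once one is on the high-probability event that the operator norm of $H$ is $O(\sqrt{N})$, Cauchy--Schwarz gives deterministically and uniformly over all $\mathcal{K}$ and all unit $z$ that $\|(D\bm{x})_{\mathcal{K}}\|_1 \leq 2\sqrt{2\delta N}\,\|\bm{x}\|_2 \leq 2C_1\sqrt{2\delta}\,N$, which is below $\tfrac12\gamma N$ once $\delta$ is chosen small relative to $\gamma^2$ --- no union bound over supports, no Chernoff estimate, and no $\epsilon$-net are needed for this particular theorem. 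Your route instead redoes the full machinery of Subsections \ref{sec:netargument}--\ref{sec:deviationbound} for the partial sum: a large-deviation upper-tail bound for $\sum_{i\in\mathcal{K}}|x_{i+1}-x_i|$ at a fixed net point (with the even/odd split to restore independence), a union bound over the $\binom{N-1}{\delta N}$ supports and the net, and a chaining transfer to the sphere using the global TV upper bound. This works: the Chernoff rate is in fact of order $\gamma^2/\delta$ per $N$ (your stated $\gamma\log(1/\delta)$ understates it, harmlessly), so it does dominate the entropy term $H(\delta)$ and the net term $(1-\alpha)\log(1+2/\epsilon)$ for $\delta$ small, and the transfer step is sound (indeed you could simplify it by applying the net bound to every $v_j$ in the chain rather than the crude $2C_1N$ bound on the tail). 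What your approach buys is a self-contained argument independent of the external methodology the paper cites; what it costs is considerable extra work relative to the one-line Cauchy--Schwarz observation that suffices here, since an upper bound on a partial $\ell_1$ norm over $O(\delta N)$ coordinates does not require any probabilistic estimate beyond the operator-norm event already in hand.
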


We notice that $\|(D\bm{x})_{\mathcal{K}}\|_1 \leq 2 \sum\limits_{j \in \mathcal{M}} |x_j|$, where $\mathcal{M}$ is the set of indices $j$'s that $x_j$ is involved in the expression $\|(D\bm{x})_{\mathcal{K}}\|_1$. Because of this, and the fact that the cardinality $|\mathcal{M}| \leq 2 |\mathcal{K}|$, we can use the same methodology in \cite{L_pMinimization} to prove Theorem \ref{thm:partialTVnorm}, based on the uniform lower bound result we have from Theorem \ref{thm:LDLowerBound}.

\subsection{Stability of Total Variation Minimization in Signal Recovery}
Our results also show that TV minimization provides stable signal recovery when the signal does not have exactly $K$-sparse gradient and there are noise contained in the measurements. In particular, we assume that the noise level is $\|A\bm{x}-\bm{y}\|_2\leq\epsilon$. Then, we solve
\begin{equation}\label{eq:TVminNoise}
\min_{\bm{x}}~\|D\bm{x}\|_1~~\mbox{s.t.}~\|A\bm{x}-\bm{y}\|_2\leq\epsilon
\end{equation}
Our results show that \eqref{eq:TVminNoise} provides robust signal recovery. The robustness result is summarized in the following theorem.

\begin{theorem}
\label{thm:robustnessstatement}
Suppose that the measurement matrix $A$ is an $M \times N$ matrix having i.i.d. standard zero mean Gaussian elements. For any constant $0<\alpha<1$ and any constant $0<C<1$, there exists a constant $\delta>0$ and a constant $\beta>0$  such that the following statement holds true, with overwhelming probability as $M\rightarrow \infty$, $N\rightarrow \infty$, and $\frac{M}{N} \rightarrow \alpha$.

For all $\bm{x} \in \mathbb{R}^{N}$ and $\|A\bm{x}-\bm{y}\|_2\leq\epsilon$,
%and for all subsets $\mathcal{K} \subseteq \{1,2,...,N-1\}$ with cardinality $|\mathcal{K}| \leq \delta N$,
the solution $\hat{\bm{x}}$ to \eqref{eq:TVminNoise} satisfies,
%for all subsets $\mathcal{K} \subseteq \{1,2,...,N-1\}$ with cardinality $|\mathcal{K}|:=K \leq \delta N$,
\begin{equation}
\|\bm{x}-\hat{\bm{x}}\|_2 \leq \left(\frac{2(1+C)}{\beta(1-C)}\right)\frac{\min_{|\mathcal{K}|\leq \delta N}\|(D\bm{x})_{\mathcal{K}^c}\|_1}{\sqrt{N}}+\left(2+\frac{4(1+C)}{\beta(1-C)}\right)\frac{\epsilon}{\sqrt{N}-\sqrt{M}},
\end{equation}
\end{theorem}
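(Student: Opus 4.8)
The theorem follows by combining two facts that hold with overwhelming probability and that both concern only vectors in the null space of $A$. The first is the \emph{almost Euclidean property for the $1$-D TV norm}: there is a constant $\beta>0$ such that $\|D\bm{w}\|_1\geq \beta\sqrt{N}\,\|\bm{w}\|_2$ for every $\bm{w}$ in the null space of $A$. To obtain it, write $\bm{w}=Hz$ with $z\in\mathbb{R}^{N-M}$; Theorem \ref{thm:LDLowerBound} gives $\|D(Hz)\|_1\geq \gamma_1 N$ when $\|z\|_2=1$ for a fixed small constant $\gamma_1>0$, while Subsection \ref{sec:TVupperbound} gives $\|Hz\|_2\leq C_1\sqrt{N}$ uniformly, so $\|D(Hz)\|_1\geq \gamma_1 N\geq (\gamma_1/C_1)\sqrt{N}\,\|Hz\|_2$ when $\|z\|_2=1$, and the general case follows by homogeneity with $\beta=\gamma_1/C_1$. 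The second fact is a \emph{stable null space property}: there is $\delta>0$ such that $\|(D\bm{w})_{\mathcal{K}}\|_1\leq C\,\|(D\bm{w})_{\mathcal{K}^c}\|_1$ for every $\bm{w}$ in the null space of $A$ and every $|\mathcal{K}|\leq \delta N$. This comes from Theorem \ref{thm:partialTVnorm} applied with its free parameter $\gamma$ chosen small, e.g. $\gamma\leq 2C\gamma_1/(1+C)$: then for $\bm{w}=Hz$ we have $\|(D\bm{w})_{\mathcal{K}}\|_1<\tfrac12\gamma N\|z\|_2$ while $\|(D\bm{w})_{\mathcal{K}^c}\|_1=\|D\bm{w}\|_1-\|(D\bm{w})_{\mathcal{K}}\|_1\geq(\gamma_1-\tfrac12\gamma)N\|z\|_2$, and the choice of $\gamma$ makes the ratio at most $C$.

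Next I control the component of the error outside the null space. Set $\bm{v}=\hat{\bm{x}}-\bm{x}$ and split it orthogonally as $\bm{v}=\bm{v}_1+\bm{v}_2$ with $\bm{v}_1$ in the null space of $A$ and $\bm{v}_2$ in its orthogonal complement (the row space of $A$). Since $\hat{\bm{x}}$ is feasible and $\bm{x}$ satisfies the noise constraint, $\|A\bm{v}\|_2\leq \|A\hat{\bm{x}}-\bm{y}\|_2+\|\bm{y}-A\bm{x}\|_2\leq 2\epsilon$, and $A\bm{v}_2=A\bm{v}$. With overwhelming probability the smallest nonzero singular value of the Gaussian matrix $A$ is at least $\sqrt{N}-\sqrt{M}$ up to a $1+o(1)$ factor (the standard extreme-singular-value estimate already used in Subsection \ref{sec:TVupperbound}), so $\|\bm{v}_2\|_2\leq 2\epsilon/(\sqrt{N}-\sqrt{M})$, and hence $\|D\bm{v}_2\|_1\leq 2\|\bm{v}_2\|_1\leq 2\sqrt{N}\,\|\bm{v}_2\|_2$.

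The remainder is deterministic. Let $\mathcal{K}^*$ be the set of the $\lfloor\delta N\rfloor$ largest-magnitude entries of $D\bm{x}$, so that $\|(D\bm{x})_{(\mathcal{K}^*)^c}\|_1=\min_{|\mathcal{K}|\leq\delta N}\|(D\bm{x})_{\mathcal{K}^c}\|_1$. From $\|D\hat{\bm{x}}\|_1\leq\|D\bm{x}\|_1$ and the triangle inequality one gets the usual cone-with-slack bound $\|(D\bm{v})_{(\mathcal{K}^*)^c}\|_1\leq \|(D\bm{v})_{\mathcal{K}^*}\|_1+2\|(D\bm{x})_{(\mathcal{K}^*)^c}\|_1$. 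Writing $D\bm{v}=D\bm{v}_1+D\bm{v}_2$ and applying the stable null space property to $\bm{v}_1$ (being careful not to double-count the contribution of $D\bm{v}_2$ on $\mathcal{K}^*$ and on its complement), a short rearrangement yields
$$\|D\bm{v}_1\|_1\leq (1+C)\,\|(D\bm{v}_1)_{(\mathcal{K}^*)^c}\|_1\leq \frac{1+C}{1-C}\,\|D\bm{v}_2\|_1+\frac{2(1+C)}{1-C}\,\|(D\bm{x})_{(\mathcal{K}^*)^c}\|_1.$$
Applying the almost Euclidean property, $\|\bm{v}_1\|_2\leq \|D\bm{v}_1\|_1/(\beta\sqrt{N})$; substituting $\|D\bm{v}_2\|_1\leq 2\sqrt{N}\|\bm{v}_2\|_2\leq 4\sqrt{N}\epsilon/(\sqrt{N}-\sqrt{M})$; and finally using $\|\bm{x}-\hat{\bm{x}}\|_2=\|\bm{v}\|_2\leq\|\bm{v}_1\|_2+\|\bm{v}_2\|_2$ produces exactly the stated inequality.

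The genuinely hard part of this result is already packaged into Theorems \ref{thm:LDLowerBound} and \ref{thm:partialTVnorm} (together with the classical bound on the extreme singular values of a Gaussian matrix); everything above is deterministic bookkeeping. The main point requiring care is that both the almost Euclidean property and the stable null space property live only on the null space of $A$, so the error vector must be split into its null-space and row-space parts and the cross term $D\bm{v}_2$ carried through the estimates; in addition, one must choose the slack parameter $\gamma$ in Theorem \ref{thm:partialTVnorm} small enough, as an explicit function of $C$ and the TV lower-bound constant $\gamma_1$, so that the cone inequality contracts with ratio $C<1$ and $\beta,\delta$ come out as genuine positive constants depending only on $\alpha$ and $C$.
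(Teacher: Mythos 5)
Your proposal is correct and follows essentially the same route as the paper's proof: split the error into its null-space and row-space parts, bound the latter via $\sigma_{\min}(A)\sim\sqrt{N}-\sqrt{M}$, run the cone inequality from TV-minimality together with the balanced condition (Theorem \ref{thm:stableBALcondition}) and the almost Euclidean property (Theorem \ref{thm:almostEuclideanTV}), and combine; your constants match the stated bound. The only difference is that you explicitly derive the $C<1$ balanced condition and $\beta$ from Theorems \ref{thm:LDLowerBound} and \ref{thm:partialTVnorm} with a suitable choice of $\gamma$, which the paper only asserts by analogy with Theorem \ref{thm:mainthmlinearregime} — a welcome bit of added detail, not a different method.
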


Note that we do not normalize the sensing matrix $A$. Therefore, it is not surprising that the error is of order $\epsilon/\sqrt{N}$, since $\epsilon$ grows with $N$ if the relative error $\|A\bm{x}-\bm{y}\|_2/\|A\bm{x}\|_2$ is fixed. To prove the theorem, we need the following theorems that provides the so-called ``Balanced Condition for TV Norm'' Theorem \ref{thm:stableBALcondition}, the ``Almost Euclidean Property ''.

%\begin{theorem}
%\label{thm:stabilitystatement}
%Suppose that the measurement matrix $A$ is an $M \times N$ matrix having i.i.d. standard zero mean Gaussian elements. For any constant $0<\alpha<1$ and any constant $C>1$, there exists a constant $\delta>0$ and a constant $\beta>0$  such that the following statement holds true, with overwhelming probability as $M\rightarrow \infty$, $N\rightarrow \infty$, and $\frac{M}{N} \rightarrow \alpha$.
%
%
%For all $\bm{x} \in \mathbb{R}^{N}$ and $\bm{y}=A\bm{x}$, the solution $\hat{\bm{x}}$ to the TV minimization problem (\ref{eq:TVprogram}) satisfies, for all subsets $\mathcal{K} \subseteq \{1,2,...,N-1\}$ with cardinality $|\mathcal{K}| \leq \delta N$,
%\begin{equation}
%\|D(\bm{x}-\hat{\bm{x}})\|_1 \leq \frac{2(C+1)}{C-1} \|(D\bm{x})_{\mathcal{K}^c}\|_1,
%\end{equation}
%and
%\begin{equation}
%\|\bm{x}-\hat{\bm{x}}\|_2 \leq \frac{2\beta (C+1)}{C-1} \frac{\|(D\bm{x})_{\mathcal{K}^c}\|_1}{\sqrt{N}},
%\end{equation}
%where $\mathcal{K}^c=\{1,2,...,N-1\}\setminus K$.
%
%\end{theorem}

%\begin{proof}
%This theorem statement follows from the ``Balanced Condition for TV Norm'' Theorem \ref{thm:stableBALcondition}, the ``Almost Euclidean Property '' Theorem  \ref{thm:almostEuclideanTV} and similar arguments for $\ell_1$ minimization in \cite{XuHassibi}.
%\end{proof}

\begin{theorem} [Balanced Condition for TV Norm]
\label{thm:stableBALcondition}
Suppose that the measurement matrix $A$ is an $M \times N$ matrix having i.i.d. standard zero mean Gaussian elements. For any constant $0<\alpha<1$ and any constant $0<C<1$, there exists a constant $\delta>0$  such that the following statement holds true, with overwhelming probability as $M\rightarrow \infty$, $N\rightarrow \infty$, and $\frac{M}{N} \rightarrow \alpha$.

For all subsets $\mathcal{K} \subseteq \{1,2,...,N-1\}$ with cardinality $|\mathcal{K}| \leq \delta N$, and for every nonzero vector $\bm{x}$ in the null space of $A$ (namely $A\bm{x}=0$, $\bm{x} \neq \bm{0}$),

\begin{equation}
\|(D\bm{x})_{\mathcal{K}}\|_1< C \|(D\bm{x})_{\mathcal{K}^c}\|_1,
\end{equation}
where $\mathcal{K}^c=\{1,2,...,N-1\}\setminus K$;

\end{theorem}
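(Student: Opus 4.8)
The plan is to deduce Theorem~\ref{thm:stableBALcondition} from the two estimates already obtained for null-space vectors, namely the uniform lower bound on the full total variation (Theorem~\ref{thm:LDLowerBound}) and the uniform upper bound on the partial total variation over small supports (Theorem~\ref{thm:partialTVnorm}). Recall that both are stated for vectors of the form $\bm{x}=Hz$ with $\|z\|_2=1$, where $H$ is an $N\times(N-M)$ matrix with i.i.d.\ standard Gaussian entries whose columns span the null space of $A$. First I would observe that the asserted inequality $\|(D\bm{x})_{\mathcal{K}}\|_1<C\|(D\bm{x})_{\mathcal{K}^c}\|_1$ is positively homogeneous of degree one in $\bm{x}$, so it is enough to prove it for all $\bm{x}=Hz$ with $\|z\|_2=1$; every nonzero vector in the null space of $A$ is a positive scalar multiple of such an $\bm{x}$.

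Next, fix $\alpha\in(0,1)$ and $C\in(0,1)$. Apply Theorem~\ref{thm:LDLowerBound} to obtain a (small, fixed) constant $\gamma_0>0$ such that, with overwhelming probability as $M,N\to\infty$ with $M/N\to\alpha$, $\|D\bm{x}\|_1\ge \gamma_0 N$ holds simultaneously for all $\bm{x}=Hz$ with $\|z\|_2=1$. Then apply Theorem~\ref{thm:partialTVnorm} with its free positive constant set to $\gamma:=\tfrac{2C}{1+C}\,\gamma_0$; this produces a constant $\delta>0$ such that, with overwhelming probability, for every $\mathcal{K}\subseteq\{1,\dots,N-1\}$ with $|\mathcal{K}|\le\delta N$ and every such $\bm{x}$,
$$\|(D\bm{x})_{\mathcal{K}}\|_1<\tfrac{1}{2}\gamma N=\tfrac{C}{1+C}\,\gamma_0 N.$$

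On the intersection of these two overwhelming-probability events, for every admissible $\mathcal{K}$ and $\bm{x}$ we then get
$$\|(D\bm{x})_{\mathcal{K}^c}\|_1=\|D\bm{x}\|_1-\|(D\bm{x})_{\mathcal{K}}\|_1\ge \gamma_0 N-\tfrac{C}{1+C}\,\gamma_0 N=\tfrac{1}{1+C}\,\gamma_0 N,$$
hence $C\|(D\bm{x})_{\mathcal{K}^c}\|_1\ge \tfrac{C}{1+C}\gamma_0 N>\|(D\bm{x})_{\mathcal{K}}\|_1$, which is exactly the balanced condition. Rescaling back removes the normalization $\|z\|_2=1$ and yields the statement for all nonzero null-space vectors, with the same $\delta$.

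Since the two substantive ingredients are assumed, the only thing to be careful about here is the order of the quantifiers and the probability bookkeeping: the lower-bound constant $\gamma_0$ must be produced first (it is delivered once and for all by Theorem~\ref{thm:LDLowerBound} and does not depend on $C$), the sparsity threshold $\delta$ is then chosen depending on $\alpha$, $C$ and $\gamma_0$ through Theorem~\ref{thm:partialTVnorm}, and a single union bound over the two events suffices because each fails with probability decaying to zero. I expect no genuine obstacle beyond this; alternatively one could re-run the $\epsilon$-net plus union-bound argument of Section~\ref{Sec:MainNetSection} with the factor $C$ carried through from the start, but channeling the result through the already-proven lemmas avoids repeating the deviation-bound estimate.
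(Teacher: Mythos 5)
Your proposal is correct and takes essentially the same route as the paper: the paper's proof of Theorem~\ref{thm:stableBALcondition} is just the argument for Theorem~\ref{thm:mainthmlinearregime} (the uniform lower bound of Theorem~\ref{thm:LDLowerBound} combined with the partial-TV upper bound of Theorem~\ref{thm:partialTVnorm}) with the constant $C$ carried through instead of $C=1$. Your explicit choice $\gamma=\tfrac{2C}{1+C}\gamma_0$ and the quantifier/union-bound bookkeeping simply spell out what the paper leaves implicit.
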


\begin{proof}
The proof is essentially similar to the proof of Theorem \ref{thm:mainthmlinearregime}, except that in the proof of Theorem \ref{thm:mainthmlinearregime}, we take $C=1$.
\end{proof}

We also have the following almost Euclidean property for the TV norm.

\begin{theorem}[Almost Euclidean Property for TV Norm]
\label{thm:almostEuclideanTV}
Suppose that the measurement matrix $A$ is an $M \times N$ matrix having i.i.d. standard zero mean Gaussian elements. For any constant $0<\alpha<1$, there exists a constant $\beta>0$  such that the following statement holds true, with overwhelming probability as $M\rightarrow \infty$, $N\rightarrow \infty$, and $\frac{M}{N} \rightarrow \alpha$.

For every nonzero vector $\bm{x}$ in the null space of $A$ (namely $A\bm{x}=0$, $\bm{x} \neq \bm{0}$),
\begin{equation}
\|D(\bm{x})\|_1 \geq \beta \sqrt{N} \|\bm{x}\|_2.
\end{equation}

\end{theorem}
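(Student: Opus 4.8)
The plan is to derive the almost Euclidean property as a consequence of the uniform lower bound on the TV norm (Theorem \ref{thm:LDLowerBound}) combined with the deterministic upper bound $\|\bm{x}\|_2 \leq C_1\sqrt{N}$ that holds with high probability for every $\bm{x}=Hz$ in the null space with $\|z\|_2=1$ (established in Subsection \ref{sec:TVupperbound}). Recall that a basis for the null space of $A$ can be written as an $N\times(N-M)$ matrix $H$ with i.i.d. standard Gaussian entries, so every nonzero null-space vector is of the form $\bm{x}=Hz$ for some $z\neq 0$, and by homogeneity of both sides of the claimed inequality $\|D\bm{x}\|_1 \geq \beta\sqrt{N}\|\bm{x}\|_2$ we may restrict attention to $\|z\|_2=1$.

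First I would invoke Theorem \ref{thm:LDLowerBound}: with overwhelming probability as $N\to\infty$, uniformly over all $z$ with $\|z\|_2=1$, we have $\|D(Hz)\|_1 \geq \gamma N$ for a sufficiently small constant $\gamma>0$. Next I would invoke the singular value deviation bound for Gaussian matrices \cite{Geman80} used in Subsection \ref{sec:TVupperbound}: with overwhelming probability the largest singular value of $H$ is at most $C_1\sqrt{N}$, hence $\|\bm{x}\|_2=\|Hz\|_2 \leq C_1\sqrt{N}$ for every unit $z$. Intersecting the two overwhelming-probability events, we get, uniformly over all nonzero $\bm{x}$ in the null space,
\begin{equation*}
\|D\bm{x}\|_1 \;\geq\; \gamma N \;=\; \frac{\gamma}{C_1}\sqrt{N}\cdot C_1\sqrt{N} \;\geq\; \frac{\gamma}{C_1}\sqrt{N}\,\|\bm{x}\|_2,
\end{equation*}
so the theorem holds with $\beta = \gamma/C_1$. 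The homogeneity argument extends this from unit-norm $z$ to all nonzero null-space vectors.

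The main obstacle is not in this final assembly — which is a short sandwiching argument — but is entirely inherited by Theorem \ref{thm:LDLowerBound}, whose proof (the $\epsilon$-net argument of Subsection \ref{sec:netargument} together with the single-point deviation bound of Theorem \ref{thm:deviationforsinglepoint}) is where the real work lies; there one must show the probability that a Gaussian vector has anomalously small total variation decays fast enough ($e^{-(1-\mu)N(\log(1/\gamma)+C_2+o(1))}$) to survive the union bound over an $\epsilon$-net of size $(1+2/(C\gamma))^{N-M}$, which forces $\gamma$ to be taken small relative to $C$. Since that theorem is already proven in the excerpt, here I would simply cite it, and the only care needed is to confirm that both high-probability events (the TV lower bound and the operator-norm upper bound on $H$) can be made simultaneously overwhelming in the regime $M/N\to\alpha$, which is immediate since each fails with probability exponentially small in $N$.
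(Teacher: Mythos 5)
Your proposal is correct and is essentially the paper's own argument: the authors also prove this theorem by combining the uniform TV lower bound of Theorem \ref{thm:LDLowerBound} with the $\ell_2$ upper bound $\|Hz\|_2 \leq C_1\sqrt{N}$ from Subsection \ref{sec:TVupperbound}, yielding $\beta = \gamma/C_1$ exactly as you do. Your write-up merely spells out the homogeneity reduction and the intersection of the two high-probability events, which the paper leaves implicit.
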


\begin{proof}
This follows from the lower bound in Theorem \ref{thm:LDLowerBound} and the upper bound results on $\ell_2$ norm in Subsection \ref{sec:TVupperbound}.
\end{proof}

Now we have the proof of Theorem \ref{thm:robustnessstatement}.

\begin{proof}[Proof of Theorem \ref{thm:robustnessstatement}]
Let $\mathcal{K} \subseteq \{1,2,...,N-1\}$ be a minimizer of $\min_{|\mathcal{K}|\leq \delta N}\|(D\bm{x})_{\mathcal{K}^c}\|_1$.
Let $\bm{w}=\bm{x}-\hat{\bm{x}}$, and we decompose it orthogonally as $\bm{w}=\bm{w}_1+\bm{w}_2$, where $\bm{w}_1$ and $\bm{w}_2$ are in the null space of $A$ and the range of $A^T$ respectively. Then, we have
\begin{equation}\label{eq:ProofRobustNoise3}
\|\bm{x}-\hat{\bm{x}}\|_2\leq \|\bm{w}_1\|_2+\|\bm{w}_2\|_2.
\end{equation}
Since $\bm{w}_2$ is in the range of $A^T$,
\begin{equation}\label{eq:ProofRobustNoise4}
\|\bm{w}_2\|_2\leq \frac{1}{\sigma_{min}(A)}\|A\bm{w}_2\|_2
=\frac{1}{\sigma_{min}(A)}\|A\bm{w}\|_2
\leq \frac{1}{\sigma_{min}(A)}(\|A\bm{x}-\bm{y}\|_2+\|A\hat{\bm{x}}-\bm{y}\|_2)
\leq \frac{2}{\sigma_{min}(A)}\epsilon.
\end{equation}
Since $\bm{w}_1$ is in the kernel of $A$, by Theorem \ref{thm:almostEuclideanTV}, we have
\begin{equation}\label{eq:ProofRobustNoise5}
\|\bm{w}_1\|_2\leq \frac{1}{\beta\sqrt{N}}\|D\bm{w}_1\|_1
\end{equation}
Let us estimate $\|D\bm{w}_1\|_1$. The minimality of $\|D\hat{\bm{x}}\|_1$ implies
\begin{equation*}
\begin{split}
\|(D\bm{x})_{\mathcal{K}^c}\|_1+\|(D\bm{x})_\mathcal{K}\|_1&=\|D\bm{x}\|_1\geq \|D\hat{\bm{x}}\|_1 = \|D\bm{x}+D\bm{w}\|_1
= \|(D\bm{x}+D\bm{w})_{\mathcal{K}}\|_1+\|(D\bm{x}+D\bm{w})_{\mathcal{K}^c}\|_1\cr
&\geq \|(D\bm{x})_{\mathcal{K}}\|_1-\|(D\bm{w})_{\mathcal{K}}\|_1+\|(D\bm{w})_{\mathcal{K}^c}\|_1-\|(D\bm{x})_{\mathcal{K}^c}\|_1,
\end{split}
\end{equation*}
which leads to
$$
2\|(D\bm{x})_{\mathcal{K}^c}\|_1+\|(D\bm{w})_{\mathcal{K}}\|_1\geq\|(D\bm{w})_{\mathcal{K}^c}\|_1.
$$
Therefore,
\begin{equation*}
\begin{split}
2\|(D\bm{x})_{\mathcal{K}^c}\|_1+&\|(D\bm{w}_1)_\mathcal{K}\|_1+\|(D\bm{w}_2)_\mathcal{K}\|_1\cr
&\geq2\|(D\bm{x})_{\mathcal{K}^c}\|_1+\|(D\bm{w})_{\mathcal{K}}\|_1\geq
\|(D\bm{w})_{\mathcal{K}^c}\|_1\geq \|(D\bm{w}_1)_{\mathcal{K}^c}\|_1-\|(D\bm{w}_2)_{\mathcal{K}^c}\|_1,
\end{split}
\end{equation*}
and thus
\begin{equation}\label{eq:ProofRobustNoise1}
\begin{split}
\|(D\bm{w}_1)_{\mathcal{K}^c}\|_1
&\leq 2\|(D\bm{x})_{\mathcal{K}^c}\|_1+\|D\bm{w}_2\|_1+\|(D\bm{w}_1)_\mathcal{K}\|_1
\leq 2\|(D\bm{x})_{\mathcal{K}^c}\|_1+2\|\bm{w}_2\|_1+\|(D\bm{w}_1)_\mathcal{K}\|_1\cr
&\leq 2\|(D\bm{x})_{\mathcal{K}^c}\|_1+\frac{4}{\sigma_{\min}(A)}\sqrt{N}\epsilon+\|(D\bm{w}_1)_\mathcal{K}\|_1.
\end{split}
\end{equation}
Moreover, by Theorem \ref{thm:stableBALcondition}
$$
\|(D\bm{w}_1)_{\mathcal{K}}\|_1\leq C \|(D\bm{w}_1)_{\mathcal{K}^c}\|_1.
$$
This together with \eqref{eq:ProofRobustNoise1} implies
$$
\|(D\bm{w}_1)_{\mathcal{K}}\|_1\leq 2C\|(D\bm{x})_{\mathcal{K}^c}\|_1+\frac{4C}{\sigma_{\min}(A)}\sqrt{N}\epsilon+C\|(D\bm{w}_1)_\mathcal{K}\|_1
$$
and further
$$
\|(D\bm{w}_1)_{\mathcal{K}}\|_1\leq \frac{2C}{1-C}\|(D\bm{x})_{\mathcal{K}^c}\|_1+ \frac{4C}{(1-C)\sigma_{\min}(A)}\sqrt{N}\epsilon.
$$
Substituting it into \eqref{eq:ProofRobustNoise1} again yields
\begin{equation*}
\begin{split}
\|(D\bm{w}_1)_{\mathcal{K}^c}\|_1&\leq (2+\frac{2C}{1-C})\|(D\bm{x})_{\mathcal{K}^c}\|_1
\left(\frac{4}{\sigma_{\min}(A)}+\frac{4C}{(1-C)\sigma_{\min}(A)}\right)\sqrt{N}\epsilon\cr
&=\frac{2}{1-C}\|(D\bm{x})_{\mathcal{K}^c}\|_1+\frac{4}{(1-C)\sigma_{\min}(A)}\sqrt{N}\epsilon.
\end{split}
\end{equation*}
We obtain
\begin{equation}\label{eq:ProofRobustNoise2}
\|D\bm{w}_1\|_1=\|(D\bm{w}_1)_{\mathcal{K}}\|_1+\|(D\bm{w}_1)_{\mathcal{K}^c}\|_1
\leq \frac{2(1+C)}{(1-C)}\|(D\bm{x})_{\mathcal{K}^c}\|_1+\frac{4(1+C)}{(1-C)\sigma_{\min}(A)}\sqrt{N}\epsilon.
\end{equation}
Finally, combine \eqref{eq:ProofRobustNoise3}, \eqref{eq:ProofRobustNoise4}, \eqref{eq:ProofRobustNoise5}, and \eqref{eq:ProofRobustNoise2} and get
$$
\|\bm{x}-\hat{\bm{x}}\|_2\leq \left(\frac{2(1+C)}{\beta(1-C)}\right)\frac{\|(D\bm{x})_{\mathcal{K}^c}\|_1}{\sqrt{N}}+ \left(2+\frac{4(1+C)}{\beta(1-C)}\right)\frac{\epsilon}{\sigma_{\min}(A)}.
$$
To conclude the proof, we use the well-known fact that the minimum non-zero singular values of an $M\times N$ Gaussian random matrix is the order of $\sqrt{N}-\sqrt{M}$.
\end{proof}

\section{Extension to Multidimensional signals}
\label{sec:multidimensional}
In this section, we extend our results to $d$-dimensional ($d\geq 2$, for example $d=2$ for image and $d=3$ for videos) signal vectors. We get results that are comparable to those in \cite{TV2d, TVMulti}. In particular, let $\bm{X}\in\mathbb{R}^{N^d}$ be a multi-indexed vector that is from a $d$-dimensional signal. Let $A\in\mathbb{R}^{M\times N^d}$ be a measurement matrix whose elements are i.i.d. Gaussian random variables, and $\bm{Y}=A\bm{X}$ be its corresponding measurements of $\bm{X}$. Define $D\bm{X}$ be the discrete gradient of $\bm{X}$. Assume that $D\bm{X}$ contains at most $K$ nonzero entries. In order to recover $\bm{X}$, similar to \eqref{eq:TVprogram}, we solve the following minimization
\begin{equation}\label{eq:TVprogramMulti}
\min_{\bm{X}}~\|D\bm{X}\|_1,\qquad\mbox{subject~to}\quad \bm{Y}=A\bm{X}.
\end{equation}
In the remaining of this section, we prove that the unique solution of \eqref{eq:TVprogramMulti} is exactly the original $\bm{X}$ with high probability, as long as
\begin{equation*}
M\geq
\begin{cases}
C_1 K\log_2^2N\ln N &\mbox{if~} d=2\cr
C_2 K\ln N & \mbox{if~} d>2.
\end{cases}
\end{equation*}
%as $K \rightarrow \infty$ and $N \rightarrow \infty$,
where $C_1>0$ and $C_2>0$ are two constants depending on $d$. Note that $\|D\bm{X}\|_1$ in \eqref{eq:TVprogramMulti} is the anisotropic TV. Our proof can be generalized to isotropic TV without too much difficulty.

Similar to Theorem \ref{thm:TVnullspacecondition1}, a sufficient condition for the original $\bm{X}$ being the unique solution of \eqref{eq:TVprogramMulti} is the null space condition \eqref{eq:TVnullspacecondition}. Different from $1$-dimensional case, this null space condition is only a sufficient condition for higher dimensional signals. Then, using the escape through the mesh theorem, this null space condition holds true with high probability if the Gaussian width satisfies $w(\mathcal{S}_d)<\sqrt{M}-\frac{1}{2\sqrt{M}}$, where
$$
\mathcal{S}_d=\{\bm{X}\in\mathbb{R}^{N^d}~:~\|\bm{X}\|_2= 1,\quad\mbox{and}\quad \|(D\bm{X})_{\mathcal{K}}\|_1\geq \|(D\bm{X})_{\mathcal{K}^c}\|_1 ~\exists \mathcal{K}\subset\{1,\ldots,N\}^{d}\times\{1,\ldots,d\}\mbox{~s.t.~}|\mathcal{K}|\leq K\}.
$$
Given any vector $\bm{X}\in\mathcal{S}_d$, we have
\begin{equation*}
\begin{split}
\|D\bm{X}\|_1&=\|(D\bm{X})_{\mathcal{K}^c}\|_1+\|(D\bm{X})_{\mathcal{K}}\|_1\leq 2\|(D\bm{X})_{\mathcal{K}}\|_1\leq 2\sqrt{K}\|(D\bm{X})_{\mathcal{K}}\|_2\cr
&\leq 2\sqrt{K}\|D\bm{X}\|_2\leq 4\sqrt{d}\sqrt{K}\|\bm{X}\|_2\leq4\sqrt{d}\sqrt{K}.
\end{split}
\end{equation*}
We have used the fact that $\|D\bm{X}\|_2\leq 2\sqrt{d} \|\bm{X}\|_2$. Therefore,
$$
\mathcal{S}_d\subset\widetilde{\mathcal{S}}_d:=\{\bm{X}\in\mathbb{R}^{N^d}~:~\|\bm{X}\|_2\leq 1,~~\|D\bm{X}\|_1\leq 4\sqrt{d}\sqrt{K}\}.
$$

In the following, we estimate the Gaussian width of $\widetilde{\mathcal{S}}_d$. Similar to $1$-dimensional signal, we consider only the case where $N=2^L$. For any $\bm{X}\in\widetilde{\mathcal{S}}_d$, we decompose $\bm{X}$ according to Haar wavelet transform for $d$-dimensional vector as
\begin{equation}\label{Eq:Haarnd}
\bm{X} =\sum_{\ell=1}^{L}\sum_{\bm{i}\in\{0,1\}^{d}\setminus\bm{0}}\hat{\bm{Z}}^{(\ell,\bm{i})}+\hat{\bm{Y}}^{(L)},
\end{equation}
where
$$
\hat{\bm{Z}}^{(\ell,\bm{i})}=\bm{Z}^{(\ell,\bm{i})}\otimes\bm{H}^{(\bm{i})}\otimes\bm{1}_{2^{\ell-1}},\quad
\bm{Z}^{(\ell,\bm{i})}\in\mathbb{R}^{(N/{2^\ell})^d}
$$
and
$$
\hat{\bm{Y}}^{(L)}=\bm{Y}^{(L)}\otimes\bm{1}_{N},\quad \bm{Y}^{(L)}\in\mathbb{R}.
$$
Here $\bm{1}_n\in\mathbb{R}^{n^d}$ is the $d$-dimensional vector whose entries are all $1$, and
$\otimes$ is the Kronecker product, i.e., $\bm{A}\otimes\bm{B}$ is the block $d$-dimensional matrix whose $(j_1,j_2,\ldots,j_d)$ block is $A_{j_1j_2\ldots j_d}\bm{B}$.
Moreover, $\bm{H}^{(\bm{i})}\in \mathbb{R}^{2^d}$ with $\bm{i}=(i_1,i_2,\ldots,i_d)$ is the (scaled) Haar filter defined by
$$
H^{(\bm{i})}_{j_1j_2\ldots j_d}=\prod_{k=1}^{d}h^{(i_k)}_{j_k},
\quad\mbox{with}\quad \bm{h}^{(0)}=[1~1],~\bm{h}^{(1)}=[1~-1].
$$
In particular, we have $\bm{H}^{(0)}=\bm{1}_2$.

The decomposition \eqref{Eq:Haarnd} is done recursively as follows. We first decompose $\bm{X}:=\hat{\bm{Y}}^{(0)}=\hat{\bm{Y}}^{(1)}+\sum_{\bm{i}\in\{0,1\}^{d}\setminus\bm{0}}\hat{\bm{Z}}^{(1,\bm{i})}$, where
$$
\hat{\bm{Y}}^{(1)} = \bm{Y}^{(1)}\otimes \bm{1}_2,
%[y_1^{(1)}~y_1^{(1)}~y_2^{(1)}~y_2^{(1)}~\ldots~\ldots~y_{N/2}^{(1)}~y_{N/2}^{(1)}],
\qquad Y_{\bm{k}}^{(1)} = \frac{\sum_{\bm{j}\in\{0,1\}^{d}}H_{\bm{j}}^{(\bm{0})}X_{2\bm{k}-\bm{j}}}{2^d}=
\frac{\sum_{\bm{j}\in\{0,1\}^{d}}X_{2\bm{k}-\bm{j}}}{2^d},
$$
and
$$
\hat{\bm{Z}}^{(1,\bm{i})} = \bm{Z}^{(1,\bm{i})}\otimes\bm{H}^{(\bm{i})},\qquad Z_{\bm{k}}^{(1,\bm{i})} = \frac{\sum_{\bm{j}\in\{0,1\}^{d}}H^{(\bm{i})}_{\bm{j}}X_{2\bm{k}-\bm{j}}}{2^d}.
$$
%Since the matrix $\bm{H}_1=[\bm{h}^{(0)}~\bm{h}^{(1)}]$ satisfies $\bm{H}_1^T\bm{H}_1=2\bm{I}_2$, the matrix $\bm{H}_d:=[\bm{H}^{(i)}]_{\bm{i}\in\{0,1\}^{d}}$ also satisfies $\bm{H}_d^T\bm{H}_d=2^d\bm{I}_{2^d}$, which ensures the decomposition is valid.
One can check that $\hat{\bm{Y}}^{(0)}=\hat{\bm{Y}}^{(1)}+\sum_{\bm{i}\in\{0,1\}^{d}\setminus\bm{0}}\hat{\bm{Z}}^{(1,\bm{i})}$. Furthermore, it can be easily shown that this decomposition is an orthogonal decomposition.
Then, we further decompose
\begin{equation}\label{Eq:HaarDdDL2}
\hat{\bm{Y}}^{(1)}=\hat{\bm{Y}}^{(2)}+\sum_{\bm{i}\in\{0,1\}^{d}\setminus\bm{0}}\hat{\bm{Z}}^{(2,\bm{i})}
\end{equation}
where
$$
\hat{\bm{Y}}^{(2)} = \bm{Y}^{(2)}\otimes \bm{1}_4,
%[y_1^{(1)}~y_1^{(1)}~y_2^{(1)}~y_2^{(1)}~\ldots~\ldots~y_{N/2}^{(1)}~y_{N/2}^{(1)}],
\qquad Y_{\bm{k}}^{(2)} = \frac{\sum_{\bm{j}\in\{0,1\}^{d}}H_{\bm{j}}^{(\bm{i})}Y^{(1)}_{2\bm{k}-\bm{j}}}{2^d}
=\frac{\sum_{\bm{j}\in\{0,1\}^{d}}Y^{(1)}_{2\bm{k}-\bm{j}}}{2^d},
$$
and
$$
\hat{\bm{Z}}^{(2,\bm{i})} = \bm{Z}^{(2,\bm{i})}\otimes\bm{H}^{(\bm{i})}\otimes \bm{1}_2,\qquad Z_{\bm{k}}^{(2,\bm{i})} = \frac{\sum_{\bm{j}\in\{0,1\}^{d}}H^{(\bm{i})}_{\bm{j}}Y^{(1)}_{2\bm{k}-\bm{j}}}{2^d}.
$$
Again, one can check \eqref{Eq:HaarDdDL2} holds true and is an orthogonal decomposition. Generally, at level $\ell$, we have that
$$
\hat{\bm{Y}}^{(\ell)}=\bm{Y}^{(\ell)}\otimes \bm{1}_{2^{\ell}},\qquad
\bm{Y}^{(\ell)}\in\mathbb{R}^{(N/2^{\ell})^d},
$$
and we decompose it as
\begin{equation}\label{Eq:HaarDdD}
\hat{\bm{Y}}^{(\ell)}=\hat{\bm{Y}}^{(\ell+1)}+\sum_{\bm{i}\in\{0,1\}^{d}\setminus\bm{0}}\hat{\bm{Z}}^{(\ell+1,\bm{i})},
\end{equation}
where
$$
\hat{\bm{Y}}^{(\ell+1)} = \bm{Y}^{(\ell+1)}\otimes \bm{1}_{2^{\ell+1}},
%[y_1^{(1)}~y_1^{(1)}~y_2^{(1)}~y_2^{(1)}~\ldots~\ldots~y_{N/2}^{(1)}~y_{N/2}^{(1)}],
\qquad Y_{\bm{k}}^{(\ell+1)} = \frac{\sum_{\bm{j}\in\{0,1\}^{d}}H_{\bm{j}}^{(\bm{0})}Y^{(\ell)}_{2\bm{k}-\bm{j}}}{2^d}
=\frac{\sum_{\bm{j}\in\{0,1\}^{d}}Y^{(\ell)}_{2\bm{k}-\bm{j}}}{2^d},
$$
and
$$
\hat{\bm{Z}}^{(\ell+1,\bm{i})} = \bm{Z}^{(\ell+1,\bm{i})}\otimes\bm{H}^{(\bm{i})}\otimes \bm{1}_{2^{\ell}},\qquad Z_{\bm{k}}^{(\ell+1,\bm{i})} = \frac{\sum_{\bm{j}\in\{0,1\}^{d}}H^{(\bm{i})}_{\bm{j}}Y^{(\ell)}_{2\bm{k}-\bm{j}}}{2^d}.
$$

The decomposition \eqref{Eq:Haarnd} has the following properties.
\begin{itemize}
\item Obviously, components in decomposition \eqref{Eq:Haarnd} are orthogonal to each others. Consequently,
$$
\|\bm{X}\|_2^2 = \sum_{\ell=1}^L\sum_{\bm{i}\in\{0,1\}^d\setminus\bm{0}}\|\hat{\bm{Z}}^{(\ell,\bm{i})}\|_2^2+\|\hat{\bm{Y}}^{(L)}\|_2^2 =
\sum_{\ell=1}^L\left(2^{d\ell}\sum_{\bm{i}\in\{0,1\}^d\setminus\bm{0}}\|\hat{\bm{Z}}^{(\ell,\bm{i})}\|_2^2\right)+2^{dL}\|\hat{\bm{Y}}^{(L)}\|_2^2.
$$
Since $\bm{X}\in\tilde{\mathcal{S}}$ implies $\|\bm{X}\|_2^2\leq 1$, we have
\begin{equation}\label{Eq:Prop1nd}
\sum_{\ell=1}^L\left(2^{d\ell}\sum_{\bm{i}\in\{0,1\}^d\setminus\bm{0}}\|\hat{\bm{Z}}^{(\ell,\bm{i})}\|_2^2\right)+2^{dL}\|\hat{\bm{Y}}^{(L)}\|_2^2
\leq 1
\end{equation}
\item It can be shown that
\begin{equation}\label{Eq:Prop20dD2}
\|D\hat{\bm{Y}}^{(\ell)}\|_1\leq \|D\hat{\bm{Y}}^{(\ell-1)}\|_1
\end{equation}
and, consequently,
\begin{equation}\label{Eq:Prop20dD}
\|D\hat{\bm{Y}}^{(\ell)}\|_{(K)}^*\leq \|D\bm{X}\|_{(K)}^*\leq 4\sqrt{d}\sqrt{K}.
\end{equation}
Let $D_i$ be the difference matrix along the $i$-th dimension. Then, similar to the 1-D case, one can show that
$\|D_i\bm{Y}^{(\ell)}\|_1\leq 2^{d-1}\cdot\frac{2}{2^d} \|D_i\bm{Y}^{(\ell-1)}\|_1=\|D_i\bm{Y}^{(\ell-1)}\|_1$. Summing over $i$ yields \eqref{Eq:Prop20dD2}.

\item Furthermore, for any vector $\bm{G}$, we have
$$
\sum_{\bm{i}\in\{0,1\}^d\setminus\bm{0}}\langle\bm{G},\hat{\bm{Z}}^{(\ell,\bm{i})}\rangle\leq \frac{2^d(2^d-1)}{2^{d-1}}\frac{4\sqrt{d}\sqrt{K}}{2^{(\ell-1)(d-1)}}\|\tilde{\bm{G}}^{(\ell-1)}\|_{\infty}
=8\sqrt{d}(2^d-1)\frac{\sqrt{K}}{2^{(\ell-1)(d-1)}}\|\tilde{\bm{G}}^{(\ell-1)}\|_{\infty},
$$
where $\tilde{\bm{G}}^{(\ell-1)}\in\mathbb{R}^{(N/2^{\ell-1})^d}$ is a $d$-dimensional signal whose $\bm{i}$-th entry is the sum of the entries of $\bm{G}$ on the $\bm{i}$-th block of size $2^{\ell-1}\times2^{\ell-1}$.
For simplicity, we prove it for $d=2$. The remaining case can be shown analogously. When $d=2$, we have the four filters are
$$
\bm{H}^{(0,0)}=\left[\begin{matrix}1&1\cr 1&1\end{matrix}\right],\quad
\bm{H}^{(1,0)}=\left[\begin{matrix}1&-1\cr 1&-1\end{matrix}\right],\quad
\bm{H}^{(0,1)}=\left[\begin{matrix}1&1\cr -1&-1\end{matrix}\right],\quad
\bm{H}^{(1,1)}=\left[\begin{matrix}1&-1\cr -1&1\end{matrix}\right].
$$
Let $D_1$ and $D_2$ be finite difference along the horizontal and vertical direction respectively. Then it is easy to check that
\begin{equation*}
\begin{split}
&\|D(a_{00}\bm{H}^{(0,0)}\otimes\bm{1}_{2^{\ell-1}}+a_{10}\bm{H}^{(1,0)}\otimes\bm{1}_{2^{\ell-1}}+
a_{01}\bm{H}^{(0,1)}\otimes\bm{1}_{2^{\ell-1}}+a_{11}\bm{H}^{(1,1)}\otimes\bm{1}_{2^{\ell-1}})\|_1\cr
&=\|D_1(a_{00}\bm{H}^{(0,0)}\otimes\bm{1}_{2^{\ell-1}}+a_{10}\bm{H}^{(1,0)}\otimes\bm{1}_{2^{\ell-1}}+
a_{01}\bm{H}^{(0,1)}\otimes\bm{1}_{2^{\ell-1}}+a_{11}\bm{H}^{(1,1)}\otimes\bm{1}_{2^{\ell-1}})\|_1\cr
&\qquad+\|D_2(a_{00}\bm{H}^{(0,0)}\otimes\bm{1}_{2^{\ell-1}}+a_{10}\bm{H}^{(1,0)}\otimes\bm{1}_{2^{\ell-1}}+
a_{01}\bm{H}^{(0,1)}\otimes\bm{1}_{2^{\ell-1}}+a_{11}\bm{H}^{(1,1)}\otimes\bm{1}_{2^{\ell-1}})\|_1\cr
&=2^{\ell-1}(|a_{01}+a_{11}|+|a_{01}-a_{11}|+|a_{10}+a_{11}|+|a_{10}-a_{11}|).
\end{split}
\end{equation*}
Therefore,
\begin{equation*}
\begin{split}
2^{\ell-1}(\|\bm{Z}^{(\ell,01)}+&\bm{Z}^{(\ell,11)}\|_1+\|\bm{Z}^{(\ell,01)}-\bm{Z}^{(\ell,11)}\|_1
+\|\bm{Z}^{(\ell,10)}+\bm{Z}^{(\ell,11)}\|_1+\|\bm{Z}^{(\ell,10)}-\bm{Z}^{(\ell,11)}\|_1)\cr
&\leq \|D(\hat{\bm{Y}}^{(\ell)}+\hat{\bm{Z}}^{(\ell,10)}+\hat{\bm{Z}}^{(\ell,01)}+\hat{\bm{Z}}^{(\ell,11)})\|_1
=\|D\hat{\bm{Y}}^{(\ell-1)}\|_1\leq 4\sqrt{2}\sqrt{K}.
\end{split}
\end{equation*}
Furthermore, if we let $\tilde{\bm{G}}_{oo}^{(\ell-1)}$ be a down sample of $\tilde{\bm{G}}^{(\ell-1)}$ on odd-odd indices and similarly $\tilde{\bm{G}}_{oe}^{(\ell-1)}$, $\tilde{\bm{G}}_{eo}^{(\ell-1)}$, and $\tilde{\bm{G}}_{ee}^{(\ell-1)}$, then
\begin{equation*}
\begin{split}
&\langle\bm{G},\hat{\bm{Z}}^{(\ell,10)}+\hat{\bm{Z}}^{(\ell,01)}+\hat{\bm{Z}}^{(\ell,11)}\rangle
=\langle\tilde{\bm{G}}^{(\ell-1)},{\bm{Z}}^{(\ell,10)}\otimes\bm{H}^{(1,0)}+
{\bm{Z}}^{(\ell,01)}\otimes\bm{H}^{(0,1)}+{\bm{Z}}^{(\ell,11)}\otimes\bm{H}^{(1,1)}\rangle\cr
=&\langle\tilde{\bm{G}}_{oo}^{(\ell-1)},{\bm{Z}}^{(\ell,01)}+{\bm{Z}}^{(\ell,10)}+{\bm{Z}}^{(\ell,11)}\rangle
+\langle\tilde{\bm{G}}_{oe}^{(\ell-1)},-{\bm{Z}}^{(\ell,01)}+{\bm{Z}}^{(\ell,10)}-{\bm{Z}}^{(\ell,11)}\rangle\cr
&+\langle\tilde{\bm{G}}_{eo}^{(\ell-1)},{\bm{Z}}^{(\ell,01)}-{\bm{Z}}^{(\ell,10)}-{\bm{Z}}^{(\ell,11)}\rangle
+\langle\tilde{\bm{G}}_{ee}^{(\ell-1)},-{\bm{Z}}^{(\ell,01)}-{\bm{Z}}^{(\ell,10)}+{\bm{Z}}^{(\ell,11)}\rangle\cr
\leq&\|\tilde{\bm{G}}_{oo}^{(\ell-1)}\|_{\infty}\|{\bm{Z}}^{(\ell,01)}+{\bm{Z}}^{(\ell,10)}+{\bm{Z}}^{(\ell,11)}\|_1
+\|\tilde{\bm{G}}_{oe}^{(\ell-1)}\|_{\infty}\|-{\bm{Z}}^{(\ell,01)}+{\bm{Z}}^{(\ell,10)}-{\bm{Z}}^{(\ell,11)}\|_1\cr
&+\|\tilde{\bm{G}}_{eo}^{(\ell-1)}\|_{\infty}\|{\bm{Z}}^{(\ell,01)}-{\bm{Z}}^{(\ell,10)}-{\bm{Z}}^{(\ell,11)}\|_1
+\|\tilde{\bm{G}}_{ee}^{(\ell-1)}\|_{\infty}\|-{\bm{Z}}^{(\ell,01)}-{\bm{Z}}^{(\ell,10)}+{\bm{Z}}^{(\ell,11)}\|_1
\end{split}
\end{equation*}
Since ${\bm{Z}}^{(\ell,01)}=\frac12((\bm{Z}^{(\ell,01)}+\bm{Z}^{(\ell,11)})+(\bm{Z}^{(\ell,01)}-\bm{Z}^{(\ell,11)}))$, ${\bm{Z}}^{(\ell,10)}=\frac12((\bm{Z}^{(\ell,10)}+\bm{Z}^{(\ell,11)})+(\bm{Z}^{(\ell,10)}-\bm{Z}^{(\ell,11)}))$, and
${\bm{Z}}^{(\ell,11)}=\frac12((\bm{Z}^{(\ell,01)}+\bm{Z}^{(\ell,11)})-(\bm{Z}^{(\ell,01)}-\bm{Z}^{(\ell,11)}))$, we have
\begin{equation*}
\begin{split}
&\|{\bm{Z}}^{(\ell,01)}+{\bm{Z}}^{(\ell,10)}+{\bm{Z}}^{(\ell,11)}\|_1
\leq\frac12(\|\bm{Z}^{(\ell,01)}+\bm{Z}^{(\ell,11)}\|_1+\|\bm{Z}^{(\ell,01)}-\bm{Z}^{(\ell,11)}\|_1)\cr &+\frac12(\|\bm{Z}^{(\ell,10)}+\bm{Z}^{(\ell,11)}\|_1+\|\bm{Z}^{(\ell,10)}-\bm{Z}^{(\ell,11)}\|_1)
+\frac12(\|\bm{Z}^{(\ell,01)}+\bm{Z}^{(\ell,11)}\|_1+\|\bm{Z}^{(\ell,01)}-\bm{Z}^{(\ell,11)}\|_1)\cr
&\leq \frac{3}{2} \frac{4\sqrt2\sqrt{K}}{2^{\ell-1}}
\end{split}
\end{equation*}
and similarly,
$$
\|{\bm{Z}}^{(\ell,01)}-{\bm{Z}}^{(\ell,10)}-{\bm{Z}}^{(\ell,11)}\|_1,
\|-{\bm{Z}}^{(\ell,01)}-{\bm{Z}}^{(\ell,10)}+{\bm{Z}}^{(\ell,11)}\|_1,\|-{\bm{Z}}^{(\ell,01)}+{\bm{Z}}^{(\ell,10)}-{\bm{Z}}^{(\ell,11)}\|_1\leq \frac{3}{2} \frac{4\sqrt2\sqrt{K}}{2^{\ell-1}}
$$
Therefore,
$$
\langle\bm{G},\hat{\bm{Z}}^{(\ell,10)}+\hat{\bm{Z}}^{(\ell,01)}+\hat{\bm{Z}}^{(\ell,11)}\rangle
\leq \frac{3\cdot 4}{2} \frac{4\sqrt{2}\sqrt{K}}{2^{\ell-1}}\|\tilde{\bm{G}}^{(\ell-1)}\|_{\infty}
=8\cdot\sqrt{2}\cdot3 \frac{\sqrt{K}}{2^{\ell-1}}\|\tilde{\bm{G}}^{(\ell-1)}\|_{\infty}
$$

\end{itemize}

Now we are ready to estimate the Gaussian width of $\tilde{\mathcal{S}}_d$. Let $\bm{G}$ be a vector whose entries are i.i.d. Gaussian random variables with mean $0$ and variance $1$. The same argument in one dimensional cases leads to
$$
E(\|\tilde{\bm{G}}^{(\ell)}\|_{\infty})\leq\sqrt{2^{d\ell}2\ln\left({e^{1/2}N^d/2^{d\ell}}\right)}
$$
which implies
\begin{equation*}
\begin{split}
E\left(\sum_{\bm{i}\in\{0,1\}^d\setminus\bm{0}}\langle\bm{G},\hat{\bm{Z}}^{(\ell,\bm{i})}\rangle\right)
&\leq8\sqrt{d}(2^d-1)\frac{\sqrt{K}}{2^{(\ell-1)(d-1)}}E(\|\tilde{\bm{G}}^{(\ell-1)}\|_{\infty})\cr
&\leq8\sqrt{d}(2^d-1)\frac{\sqrt{K}}{2^{(\ell-1)(d-1)}}\sqrt{2^{d(\ell-1)}2\ln\left({e^{1/2}N^d/2^{d(\ell-1)}}\right)}\cr
&\leq8\sqrt{d}(2^d-1)\frac{\sqrt{K}}{2^{(\ell-1)(d-1)}}\sqrt{2^{d(\ell-1)}2\ln\left({e^{1/2}N^d}\right)}\cr
&=8\sqrt{d}(2^d-1)\sqrt{K}2^{(\ell-1)(1-\frac{d}{2})}\sqrt{2\ln\left({e^{1/2}N^d}\right)}.
\end{split}
\end{equation*}
Moreover,
$$
E\left(\langle\bm{G},\hat{\bm{Y}}^{(L)}\rangle\right)
=E\left(Y^{(L)}\tilde{\bm{G}}^{(L)}\right)\leq |Y^{(L)}|E\left(\|\tilde{\bm{G}}^{(L)}\|_{\infty}\right)
\leq \sqrt{\frac{1}{2^{dL}}}\sqrt{2^{dL}2\ln\left({e^{1/2}N^d/2^{dL}}\right)}=\sqrt3.
$$
Therefore,
\begin{equation*}
\begin{split}
E(\langle\bm{G},\bm{X}\rangle)
&=\sum_{\ell=1}^{L-1}E\left(\sum_{\bm{i}\in\{0,1\}^d\setminus\bm{0}}\langle\bm{G},\hat{\bm{Z}}^{(\ell,\bm{i})}\rangle\right)+E\left(\langle\bm{G},\hat{\bm{Y}}^{(L)}\rangle\right)\cr
&=8\sqrt{d}(2^d-1)\sqrt{K}\sqrt{2\ln\left({e^{3/2}N^d}\right)}\sum_{\ell=1}^{L-1}2^{(\ell-1)(1-\frac{d}{2})}+\sqrt3\cr
&\leq
\begin{cases}
8\sqrt{d}(2^d-1)\sqrt{K}\sqrt{2\ln\left({e^{1/2}N^d}\right)}
\log_2N + \sqrt3& \mbox{if~} d=2,\\
8\sqrt{d}(2^d-1)\sqrt{K}\sqrt{2\ln\left({e^{1/2}N^d}\right)}
\frac{2^{1-\frac{d}{2}}}{1-2^{1-\frac{d}{2}}}+\sqrt3 & \mbox{if~} d>2
\end{cases}
\end{split}
\end{equation*}

We require the Gaussian width is about $\sqrt{M}$, where $M$ is the number of measurement. So, we have
\begin{equation*}
M\sim
\begin{cases}
K\log_2^2N\ln N &\mbox{if~} d=2\cr
K\ln N & \mbox{if~} d>2.
\end{cases}
\end{equation*}

\section{Conclusion}
\label{sec:conclusion}

In this paper,we establish the proof for the performance guarantee of total variation (TV) minimization in recovering \emph{one-dimensional} signal with sparse gradient support. The almost Euclidean property of subspaces \cite{Kasin, Yin,nonlinear} is used to extend our results to proving the stability of TV minimization for signals with approximately sparse gradients or under noisy measurements. This partially answers the open problem of proving the fidelity of total variation minimization in such a setting \cite{TVMulti}. We also extend our results to TV minimization for multidimensional signals. Recoverable sparsity thresholds of TV minimization are explicitly computed for $1$-dimensional signal by using the Grassmann angle framework. Stability of TV minimization has also been established for $1$-dimensional signal vectors.

Our current results work only for the Gaussian ensemble of measurement matrices. One future direction is to extend our results to general deterministic and random measurement matrices, such as partial Fourier matrices, and random Bernoulli matrices. Another direction we would like to pursue is to tighten our bounds for $1$-dimensional signal vector. For multidimensional signals, we conjecture that for Gaussian measurement operators, when the number of measurements is proportional to the problem dimension $N^d$, the recoverable sparsity of gradient support, by the TV minimization, can also grow proportionally with $N^d$. We are also interested in working towards tightening our results in this direction.

\bibliographystyle{SIAM}
\bibliography{TVMin}

\end{document}